\newtheorem{theorem}{Theorem}
\newtheorem{lemma}[theorem]{Lemma}
\newtheorem{corollary}[theorem]{Corollary}
\newtheorem{remark}{Remark}
\newtheorem{open}{Open Problem}
\newcommand{\Hlinecell}{{\mathcal H}_{line-cell}}
\newcommand{\Hvertexcell}{{\mathcal H}_{vertex-cell}}
\newcommand{\Hcellzone}{{\mathcal H}_{cell-zone}}
\newcommand{\Hyper}{{\mathcal H}}
\definecolor{SEB}{rgb}{0,0,1}
\newcommand{\mati}[1]{\textcolor{red}{\textsc{Mati:} #1}}
\newcommand{\sholong}[2]{#2}
\newcommand{\proofguardwithcells}{The first iterations of the algorithm select cells that each cover four new lines. 
We iteratively select a cell covering four lines as long as the average number of segments of uncovered lines bounding a cell is strictly greater than three. The total number of segments is $n^2$, and each contribute to two cells. Every selected cell discards four lines, and exactly $4\times 2n=8n$ segments of those. If the cell is bounded by more than four lines, we only discard exactly four of them, arbitrarily. The total number of cells after the $i$th iteration is $|C|-i$. The number of iterations is the largest value $i$ that satisfies:
\begin{eqnarray*}
\frac{2n^2 - 8in}{|C|-i} & > & 3, \\
\frac{2n^2 - 8in}{n(n+1)/2 +1 -i} & > & 3, \\
i & \sim & \frac n{16} +o(n) .
\end{eqnarray*}
Hence we can select roughly $\frac n{16}$ cells, covering together $\frac n4$ lines.

In the second phase of the algorithm, we iteratively select cells covering three new lines. Following the same reasoning, and taking into account the $i\simeq n/16$ previously selected cells, we know that the number $j$ of iterations satisfies:
\begin{eqnarray*}
\frac{2n^2 - 8in - 6jn}{|C| -i -j} & > & 2, \\
\frac{2n^2 - n^2/2 - 6jn}{n(n+1)/2 +1 -n/16 -j} & > & 2, \\
j & \sim & \frac n{12} +o(n).
\end{eqnarray*}
Hence we can select roughly $\frac n{12}$ more cells, covering together $\frac n4$ lines.

Overall, we now have $\frac n{16}+\frac n{12} + o(n)$ cells covering $\frac n2$ lines. It remains to cover the remaining $\frac n2$ lines with $\frac n4$ cells, each covering two lines, as in Theorem~\ref{warmup:guardwithcells}. The total number of cells is therefore
\begin{equation*}
\frac n{16}+\frac n{12} + \frac n4 + o(n) = \frac{19}{48} n +o(n).
\end{equation*}}
\title{Coloring and Guarding Arrangements\thanks{Research supported by the the ESF EUROCORES programme EuroGIGA, CRP ComPoSe: Fonds National de la Recherche Scientique (F.R.S.-FNRS) - EUROGIGA NR 13604, for Belgium, and MICINN Project EUI-EURC-2011-4306, for Spain.}}
\author{Prosenjit Bose\thanks{Carleton University, Ottawa, Canada. \tt{jit@scs.carleton.ca}}
\and Jean Cardinal\thanks{Universit\'e Libre de Bruxelles (ULB), Brussels, Belgium. \tt{\{jcardin, secollet, mkormanc,slanger, perouz.taslakian\}@ulb.ac.be}} 
\and S\'ebastien Collette\footnotemark[3]~~\thanks{Charg\'e de Recherches du F.R.S.-FNRS.}  
\and Ferran Hurtado\thanks{Universitat Polit\`{e}cnica de Catalunya (UPC), Barcelona, Spain.
 {\tt ferran.hurtado@upc.edu}. Partially supported by projects MTM2009-07242 and Gen. Cat. DGR 2009SGR1040.} 
  \and Matias Korman\footnotemark[3]
  \and Stefan Langerman\footnotemark[3]~~\thanks{Ma\^itre de Recherches du F.R.S.-FNRS.} 
  \and Perouz Taslakian\footnotemark[3]}
\begin{document}
\maketitle

\begin{abstract}
Given a simple arrangement of lines in the plane, what is the minimum number $c$ of colors required to color the lines so that no cell of the arrangement is monochromatic? In this paper we give bounds on the number c both for the above question, as well as some of its variations. We redefine these problems as geometric hypergraph coloring problems. If we define $\Hlinecell$ as the hypergraph where vertices are lines and edges represent cells of the arrangement, the answer to the above question is equal to the chromatic number of this hypergraph. We prove that this chromatic number is between $\Omega (\log n / \log\log n)$. and $O(\sqrt{n})$.

Similarly, we give bounds on the minimum size of a subset $S$ of the intersections of the lines in $\mathcal{A}$ such that every cell is bounded by at least one of the vertices in $S$. This may be seen as a problem on guarding cells with vertices when the lines act as obstacles. The problem can also be defined as the minimum vertex cover problem in the hypergraph $\Hvertexcell$, the vertices of which are the line intersections, and the hyperedges are vertices of a cell. Analogously, we consider the problem of touching the lines with a minimum subset of the cells of the arrangement, which we identify as the minimum vertex cover problem in the $\Hcellzone$ hypergraph.


\end{abstract}
\sholong{\newpage}{}

\section{Introduction}
While dual transformations may allow converting a combinatorial geometry problem about a \emph{configuration of points} into a problem about an \emph{arrangement of lines}, or reversely, the truth is that most mathematical questions appear to be much cleaner and natural in only one of the settings. In many cases, the dual version is considered solely when, besides making sense, it is additionally useful. Both kinds of geometric objects have inspired many problems and attracted much attention. For finite point sets the \emph{Erd\H{o}s-Szekeres problem} on finding large subsets in convex position, or the \emph{repeated distances problem} on how many times can a single distance appear between pairs of points, are examples of famous questions that have been pursued for decades and are still open. Many research problems of this kind are described in Chapter 8 in \cite{BMP}. Concerning arrangements of lines, possibly the most prevalent problems consist of studying the number of cells of each size, say triangles, that appear in every arrangement, but many other issues have been considered (see \cite{Fe,Gru1,Gru2}). There are also problems that combine both kinds of objects, like counting incidences between points and lines, or studying the arrangements of lines spanned by point sets, which includes the celebrated \emph{Sylvester-Gallai problem} on ordinary lines (those that only contain two of the points) \cite{BMP}.

In the first class of problems, substantial attention has been focusing on \emph{colored pointsets}, i.e., configurations of points that belong to several classes, the \emph{colors}, including chromatic variants of the repeated distances problem and the Erd\H{o}s-Szekeres problem, and colored versions of \emph{Tverberg's Theorem} and \emph{Helly's Theorem}. In particular there is a vast body of research on problems involving a set of \emph{red} points and a set of \emph{blue} points. Refer to \cite{KK} for a survey on red-blue problems, or to \cite{BMP} for a more generic account.

Somehow surprisingly, there is not a comparable set of questions that have been posed for colored arrangements of lines. There is a series of papers on the problem of taking bicolored sets of lines, calling \emph{monochromatic vertex} an intersection point contained only in lines of one of the colors, and discussing their existence and number \cite{Gru3,Gru4,Mo}. Another series of papers study the colorings of the so called arrangement graphs, in which vertices are the intersection points and edges are the segments between any two that are consecutive on one of the lines \cite{BEW,FHNS}.

However, many other natural questions can be asked. For example, is it true that every bicolored arrangement of lines has a monochromatic cell? We prove in this paper that the generic answer is no, but that it is yes when the colors are slightly unbalanced. This leads immediately to another question that we discuss in our work: How many colors are always sufficient, and occasionally necessary, to color any set of $n$ lines in such a way that the induced arrangement contains no monochromatic cell?

The last question brings manifestly the flavor of \emph{Art Gallery Problems} \cite{OR1,OR2,URR}. We also consider for line arrangements several issues on this topic that apparently have not been studied before: How many vertices of an arrangement suffice to guard all its cells? How many lines are enough to guard (touch) all the cells?

While coloring and guarding arrangements of lines may appear at first glance as unrelated problems, there is a clean unifying framework provided by considering appropriate geometric hypergraphs. For example, minimally coloring an arrangement while avoiding monochromatic cells can be reformulated as follows: Let $\Hlinecell$ be the geometric hypergraph where vertices are lines and edges represent cells of the arrangement. What is the chromatic number of the hypergraph? Here a proper coloring is one where no hyperedge is monochromatic.

In this work we consider several questions as the ones described in the preceding paragraphs, which arise as quite fundamental in terms of coloring and guarding arrangements of lines, and translate consistently into problems on geometric hypergraphs,  like size of a maximal independent set, size of a vertex cover, or some coloring parameter.

The terminology for hypergraphs on arrangements is introduced in Section \ref{sec:definitions}, where we also provide a table summarizing our results. Coloring problems are then discussed in Section \ref{sec:coloring} and guarding problems in Section \ref{sec:guarding}. We conclude with some observations and open problems.

\section{Definitions and Summary of Results}\label{sec:definitions}
Let $\mathcal{A}$ be an arrangement of a set of lines $L$ in $\mathbb{R}^2$. We say that an arrangement of lines $\mathcal{A}$ is {\em simple} if every two lines intersect, and no three lines have a common intersection point. From now on, we only consider simple arrangements of lines\footnote{For non-simple arrangements, the answer to most of the problems we study are either trivial or not well defined.}. 

Any arrangement $\mathcal{A}$ decomposes the plane into different cells, where a {\em cell} is a maximal connected component of $\mathbb{R}^2\setminus L$. We define $\Hlinecell = (L,C)$ as the geometric hypergraph corresponding to the arrangement, where $C$ is the set containing all cells defined by $L$. Similarly, $\Hvertexcell= (V,C)$ is the hypergraph defined by the vertices of the arrangements and its cells, where $V={L\choose 2}$ is the set of intersection of lines in $\mathcal{A}$. Finally, $\Hcellzone = (C, Z)$ is the hypergraph defined by the cells of the arrangement and its zones. The zone of a line $\ell$ in $\mathcal{A}$ is the set of cells bounded by $\ell$. The set $Z$ is defined as the set of subsets of $C$ induced by the zones of $\mathcal{A}$. Note that this hypergraph is the dual hypergraph of $\Hlinecell$.

An \emph{independent set} of a hypergraph $\Hyper = (V,E)$ is a set $S \subseteq V$ such that $\forall e \in E: e\not \subseteq S$. This definition is the natural extension from the graph variant, and requires that no hyperedge is completely contained in $S$. Analogously, a \emph{vertex cover} of $\Hyper$ is a set $S \subseteq V$ such that $\forall e \in E: e \cap S \neq \emptyset$. The \emph{chromatic number} $\chi(\Hyper)$ of $\Hyper$ is the minimum number of colors that can be assigned to the vertices $v \in V$ so that $\forall e \in E: \exists v_1, v_2 \in e : col(v_1) \neq col(v_2)$; that is, no hyperedge is monochromatic.

In the forthcoming sections we give upper and lower bounds on the worst-case values for these quantities on the three hypergraphs defined from a line arrangement. Our results are summarized in Table \ref{tabres}. Note that the maximum independent set and minimum vertex cover are complementary problems. As a result, any lower bound on one gives an upper bound on the other and vice versa. This property, along with the facts that $|L|=n$, $|V|={n\choose 2}$, and $|C|={{n(n+1) \over 2} +1}$, are used to complement many entries of the table. 


The definitions of an independent set and a proper coloring of the $\Hlinecell$ hypergraph of an arrangement are illustrated in Figures~\ref{fig:independent} and \ref{fig:coloring}, respectively. Similarly, the definition of a vertex cover of the $\Hvertexcell$
and $\Hcellzone$ hypergraphs are illustrated in Figures~\ref{fig:guardingcells}, and \ref{fig:guardinglines}, respectively.

\begin{table}
\begin{center}
\begin{tabular}{|c|c|c|c|}
\hline
Hypergraph & Max. Ind. Set & Vertex Cover & Chromatic number\\
\hline
$\Hlinecell$ & $\geq {\sqrt{n}\over 2}$ (Th.~\ref{th:is})& $\ge \frac{n}{3}$ (Cor. \ref{cor:CL})& $\Omega (\log n/\log\log n)$ (Th.~\ref{LB:chroma})\\
		    & $\le \frac{2n}{3}$ (Th.~\ref{th:ubis}) 
		    & $\le n-{\sqrt{n}\over 2}$ (Cor. \ref{cor:CL})& $\leq2\sqrt{n}+O(1)$  (Th.~\ref{UB:chrom})\\
\hline
$\Hvertexcell$ & $\geq{n^2 \over 3}-{5n \over 2}$ (Cor. \ref{cor:VCIS}) & $\geq {n^2 \over 6}$ (Th.~\ref{LB:guards})& 2 (Th.~\ref{tight:chromvertex})\\
 & $\leq {n^2 \over 3}-{n \over 2}$ (Cor. \ref{cor:VCIS}) &$\leq {n^2 \over 6}+2n$ (Th.~\ref{LB:guards})& \\
\hline
$\Hcellzone$ & $\ge {n^2 \over 2} + {5n \over 48} -o(n)$ (Cor. \ref{cor:CZIS})& $\ge {n\over 4}$ (Th.~\ref{guardwithcells})& 2 (Th.~\ref{tight:chromcells})\\
$					 $ & $\le {n^2 \over 2} + {5n \over 4} +1$ (Cor. \ref{cor:CZIS})& $ \le {19n \over 48}+ o(n)$ (Th.~\ref{guardwithcells})&\\
\hline
\end{tabular}
\end{center}
\caption{Summarizing table with the worst-case bounds for the different problems studied in this paper. 
}\label{tabres}
\end{table}

\begin{figure}
\begin{center}
\subfigure[\label{fig:independent}The thick lines form an independent set in the $\Hlinecell$ hypergraph: no cell is bounded by
those lines only.]{\includegraphics[width=.4\textwidth]{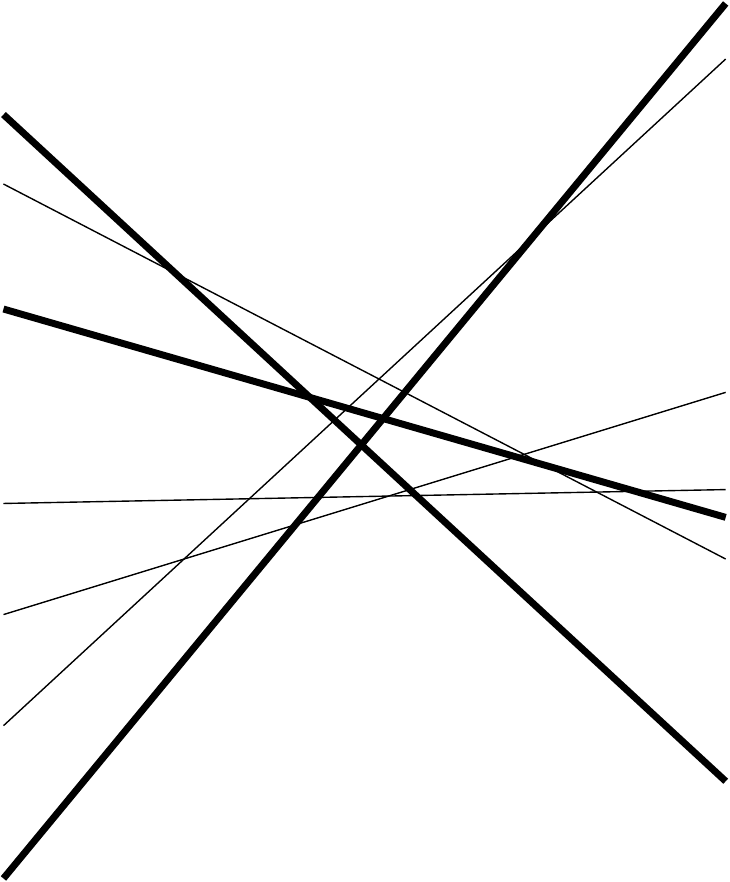}}\hspace{1cm}
\subfigure[\label{fig:coloring}A proper 3-coloring of the $\Hlinecell$ hypergraph: no cell is monochromatic.]{\includegraphics[width=.4\textwidth]{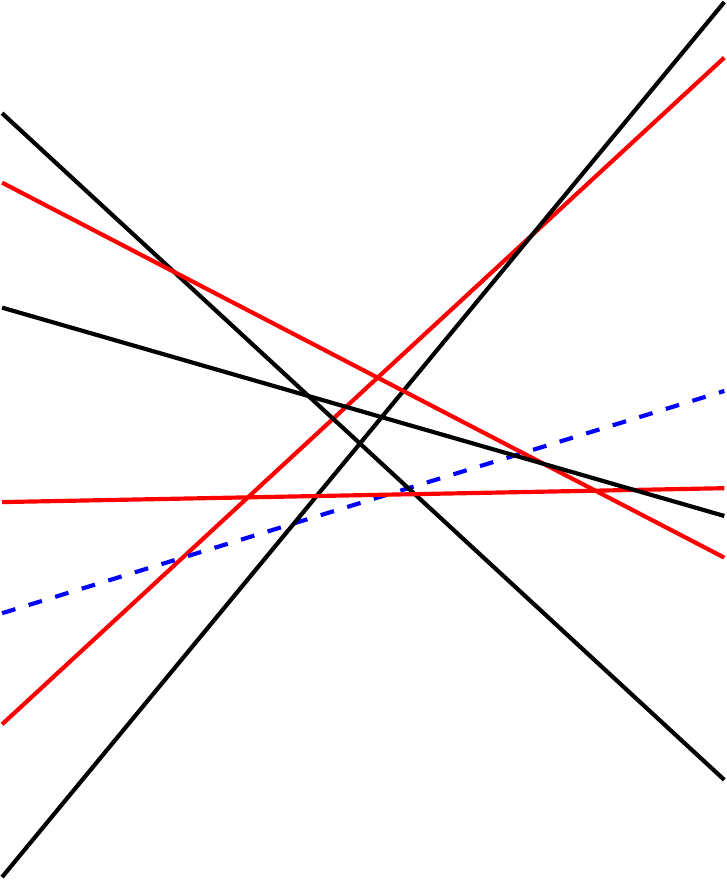}}\hspace{1cm}
\subfigure[\label{fig:guardingcells}The marked intersections form a vertex cover of the $\Hvertexcell$ hypergraph: every cell has at least
one such intersection on its boundary. That is, these vertices \emph{guard} the cells.]{\includegraphics[width=.4\textwidth]{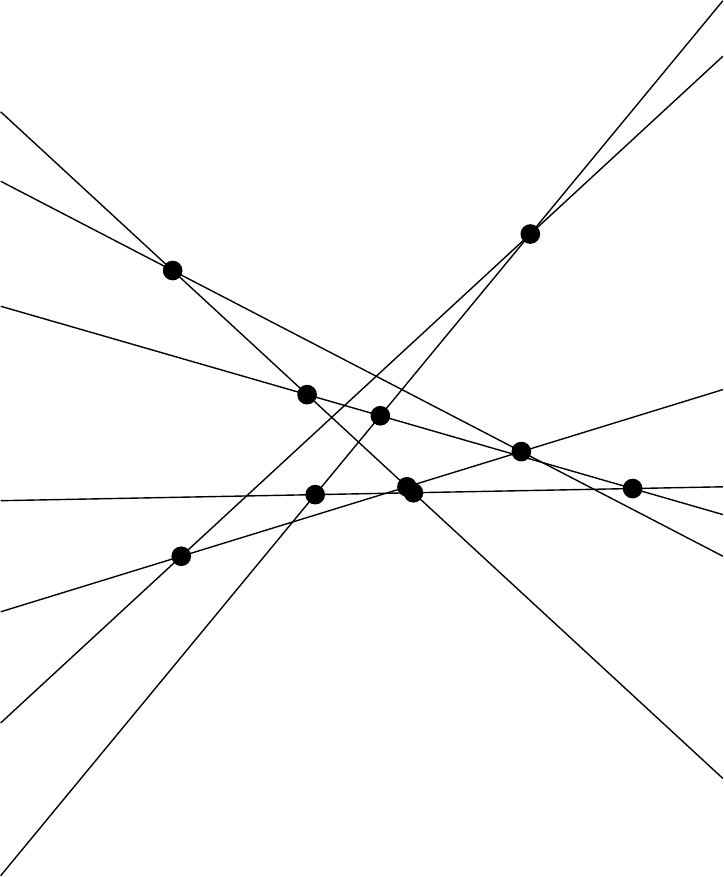}}\hspace{1cm}
\subfigure[\label{fig:guardinglines}The two marked cells form a vertex cover of the $\Hcellzone$ hypergraph: every line has a segment that lies on the boundary
of one of those cells. That is, these two cells \emph{guard} the lines.]{\includegraphics[width=.4\textwidth]{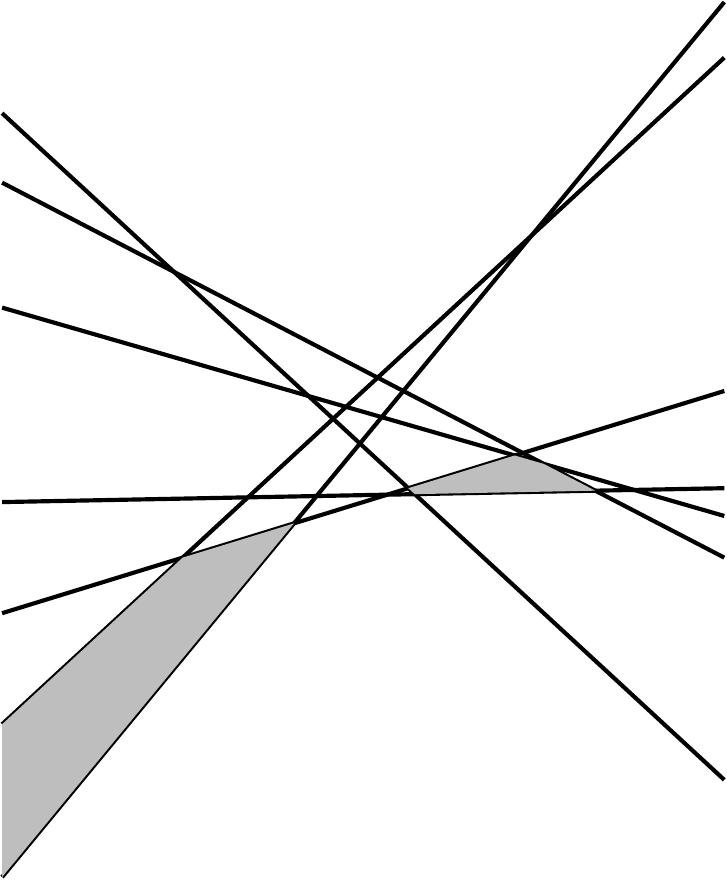}}
\end{center}
\caption{\label{fig:examples}Illustrations of the definitions.}
\end{figure}

\section{Coloring Lines, and Related Results}\label{sec:coloring}

We first consider the chromatic number of the line-cell hypergraph of an arrangement, that is, the number of colors required for coloring the lines so that no cell has a monochromatic boundary. At the end of the section we include some similar results.

\subsection{Two-colorability}

We say that a set of lines $L$ is {\em $k$-colorable} if we can color $L$ with $k$-colors such that no cell is monochromatic (in other words, the corresponding $\Hlinecell$ hyper graph has chromatic number $k$). Any coloring $c:L \rightarrow \{0,\ldots, k\}$ that satisfies such a property is said to be {\em proper}. We first tackle the (simple) question of whether the two-colorable $\Hlinecell$ hypergraphs have bounded size:

\begin{theorem}\label{prop:2colors}
There are arbitrarily large two-colorable sets of lines.
\end{theorem}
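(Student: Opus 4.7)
The plan is to construct, for each even $n\ge 2$, an explicit arrangement of $n$ lines admitting a proper 2-coloring. Take $n$ lines tangent to a very small circle at tangent points in generic position, so that no two lines are parallel and the arrangement is simple. Label these lines $\ell_1,\ldots,\ell_n$ in the cyclic order of their directions (modulo $\pi$), and color them alternately: $\ell_i$ is red if $i$ is odd, and blue if $i$ is even. Since $n$ is even, this alternating assignment is well-defined on the induced cycle.

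The verification reduces to a structural property I would establish: every cell of the arrangement is bounded by a set of lines forming a \emph{cyclic interval} $\{\ell_a,\ell_{a+1},\ldots,\ell_b\}$ of consecutive-direction lines (indices mod $n$). Granting this, the coloring is proper: every cell has at least two lines on its boundary, and any cyclic interval of length at least $2$ in an alternating cycle of even length contains indices of both parities, so both colors appear on the boundary. In particular, unbounded cells are bounded by two consecutive-direction lines (different colors by alternation), the central cell is bounded by the full cycle (both colors present since $n\ge 2$), and all other bounded cells are bounded by intermediate cyclic intervals.

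The main obstacle is establishing the structural property. The intuition is that, in a near-pencil arrangement, each cell lies in an angular sector at infinity bounded by two consecutive-direction lines; any other line whose direction falls within that sector must pass through the small central disk and cut the cell, so it cannot be absent from the bounding set. Conversely, lines whose directions lie outside the sector stay on the opposite side of the central disk and never reach the cell. A careful case analysis (or induction on the ``depth'' of a cell, proceeding inward from infinity toward the small circle) shows that the bounding lines of each cell are precisely those whose directions lie in some cyclically contiguous arc, confirming the claim.
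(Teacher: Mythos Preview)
Your approach coincides with the paper's: take lines in convex position and color them alternately by slope. The paper actually uses $2q+1$ lines while you take $n$ even; your parity choice is the right one, since with an odd number the unbounded wedge bounded only by the first and last lines (which exists in any convex-position arrangement) would be monochromatic under the alternating coloring.

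One caveat: your structural claim that every cell is bounded by a cyclic interval of lines is asserted rather than proved (you defer it to ``a careful case analysis or induction''), which matches the paper's own level of detail. Be aware, though, that for tangents to a circle whose tangent points span more than a semicircle, the direction order (mod~$\pi$) need not agree with the tangent-point order around the circle, and it is the latter in which the cyclic-interval property naturally holds. Taking tangents to a parabola, or restricting tangent points to a half-circle, makes the two orders coincide and removes this subtlety.
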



\begin{proof} An infinite family of such examples are provided by a set of $2q+1$ lines in convex position (for any $q\in \mathbb{N}$\sholong{)}{, see Figure~\ref{fig:bichrom})}.Observe that in such arrangement, each cell is either bounded by $(1)$ two consecutive lines, $(2)$ the first and the last line or $(3)$ all lines of the arrangement. It is easy to check that, if we color the lines alternatively red and blue by order of slope, no cell will be monochromatic.
\end{proof}

The coloring used in Theorem \ref{prop:2colors} uses essentially the same number of lines of each color. We prove next that the result cannot hold when the numbers are unbalanced to a small extent.

\begin{theorem}
Each color class of a proper 2-coloring $c:L\rightarrow \{0,1\}$ of a set $L$ of $n$ lines has size at most ${n\over 2} + {\sqrt{n-1}-1\over 2}$.
\end{theorem}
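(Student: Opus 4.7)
I would prove the bound (for the larger class, say red, with $r \ge b$) by a double-counting of what I call \emph{bichromatic corners}: vertices $v$ of a cell $C$ such that the two edges of $C$ incident to $v$ have different colors.

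First I would observe that if $t(C)$ denotes the number of bichromatic corners of $C$, then $\sum_C t(C) = 4rb$. Indeed, each of the $rb$ intersection points between a red and a blue line is a vertex of exactly four cells of the arrangement and contributes one bichromatic corner to each; monochromatic vertices contribute nothing.

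Next I would lower-bound $t(C)$ cell by cell, using that the coloring is proper (so every cell has both colors on its boundary). A \emph{bounded} cell has a cyclic edge-color sequence, so its number of color transitions is even and positive, hence $t(C)\ge 2$. An \emph{unbounded} cell has a linear edge-color sequence flanked by its two infinite rays; if the two extreme rays have different colors then $t(C)\ge 1$, while if the two extreme rays share a color then the required bounded edge of the opposite color forces at least two transitions, hence $t(C)\ge 2$. Letting $q$ be the number of unbounded cells whose two extreme rays have different colors (so the remaining $2n-q$ unbounded cells have same-colored extremes), summing gives
\[
4rb \;=\; \sum_C t(C) \;\ge\; 2\binom{n-1}{2} + 2(2n-q) + q \;=\; (n-1)(n-2) + 4n - q.
\]

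The crucial last step is an upper bound on $q$. I identify $q$ with the number of color transitions in the cyclic sequence of the $2n$ infinite rays at infinity (two per line, each inheriting its line's color): each transition corresponds to a pair of cyclically consecutive rays of different colors, which is exactly the pair of extreme rays of the unbounded cell wedged between them. Since this cyclic sequence has $2r$ red and $2b$ blue entries, the number of monochromatic runs of each color is at most $2\min(r,b)$, so $q \le 4\min(r,b) = 4b$. Substituting gives $4rb \ge (n-1)(n-2) + 4r$, i.e.\ $r(b-1)\ge (n-1)(n-2)/4$. Setting $b = n-r$ turns this into the quadratic inequality $4r^{2} - 4(n-1)r + (n-1)(n-2)\le 0$, whose roots are $(n-1\pm\sqrt{n-1})/2$, yielding $r \le \frac{n-1+\sqrt{n-1}}{2} = \frac{n}{2} + \frac{\sqrt{n-1}-1}{2}$.

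The main obstacle is the careful treatment of unbounded cells: one must justify the case split for $t(C)$ when the two extreme rays agree in color (exploiting that both colors still appear among the bounded middle edges), and precisely identify $q$ with the count of transitions in the cyclic color sequence at infinity via the bijection between unbounded cells and consecutive pairs of infinite rays.
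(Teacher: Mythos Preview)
Your proof is correct and takes a genuinely different route from the paper's argument.

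The paper fixes the larger class $R$, looks at the sub-arrangement $\mathcal{A}_R$, and inserts the lines of $B$ one at a time while tracking, for each cell of $\mathcal{A}_R$, the number of connected components formed by the inserted blue segments. The key structural observation there is that blue segments inside a red cell can never close a cycle (such a cycle would bound a monochromatic blue face), so each new blue line can merge at most $i-1$ existing components and the total component count grows by exactly $|R|+2-i$ at step $i$. Requiring at least one component per red cell at the end yields the quadratic inequality.

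Your argument is a single static double count: you count bichromatic corners once as $4rb$ via the red--blue intersection points, and once from below cell by cell, using only that no cell is monochromatic. The delicate part, which you handle correctly, is the treatment of the $2n$ unbounded cells: identifying $q$ with the number of color changes in the cyclic sequence of rays at infinity, and bounding that by $4b$ since the sequence has $2r$ red and $2b$ blue entries. This is what produces the exact constant needed to land on the same quadratic $4r^{2}-4(n-1)r+(n-1)(n-2)\le 0$.

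Both approaches hit the identical bound. The paper's incremental component-counting reveals a forest structure of blue segments inside red cells (which might be reusable for refinements or related questions), whereas your bichromatic-corner count is shorter, entirely elementary, and avoids any process argument. Either proof would serve; yours is arguably the cleaner of the two.
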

\begin{proof}
Let $R$ be the set of lines that are assigned color $0$, and $B$ the set of lines whose color is $1$. Let $\mathcal{A}_{R}$ denote the arrangement of the lines in $R$. As $R$ does not completely define a cell of $\mathcal A$, each cell of $\mathcal{A}_{R}$ must be traversed by a line in $B$.

We proceed iteratively: we start with $\mathcal{A}_{R}$, and add the lines in $B$ one at a time. When adding a line $\ell$, some cells of $\mathcal{A}_{R}$ will be split into two parts by a line segment induced by $\ell$. A connected component of segments inside a cell of $\mathcal{A}_{R}$ is a set of segments whose intersection graph is connected.

To each cell $c$ of $\mathcal{A}_{R}$, we assign a number, representing the number of connected components defined by the segments inside $c$. Let $f(c,i)$ denote the number of connected components inside the cell $c$ after adding the $i-{th}$ line of $B$, and $f(\mathcal{A}_{R},i)$ is the sum of $f(c,i)$ over all cells $c \in \mathcal{A}_{R}$.

When the first line $\ell_{1} \in B$ is added, $f(c,1)=1$ for each cell $c$ crossed by $\ell_{1}$, and remains zero for every other cell. Therefore, $f(\mathcal{A}_{R},1)=|R|+1$. In general, when the $i-{th}$ line $\ell_{i}$ is added, $f(\mathcal {A}_{R},i)$ increases by $|R|+2-i$. Indeed, in each cell $c$, the blue line can only intersect each component once, otherwise the corresponding segments would create a cycle, meaning a new face bounded only by blue lines, and thus monochromatic. This implies that $\ell_{i}$ intersects all previous $i-1$ lines in $i-1$ disjoint components. Inside a cell $c$, if a line $\ell_{i}$ intersects $t$ components, then $f(c,i)= f(c,i-1)-t +1$. Thus, $f(\mathcal {A}_{R},i)= f(\mathcal {A}_{R},i-1)- (i-1) + |R|+1=  f(\mathcal {A}_{R},i-1)+ |R|+2 -i$.

What we also know, is that at the end of the process, each cell of $\mathcal {A}_{R}$ should contain at least one component, otherwise the cell is monochromatic. Thus $f(\mathcal {A}_{R},|B|)$ should be bigger or equal to the number of cells in $\mathcal {A}_{R}$.

We get:
\begin{eqnarray*}
f(\mathcal {A}_{R},|B|) & = & \sum_{i=1}^{|B|} |R|+2-i,\\
\sholong{}{{|R| \cdot (|R|+1) \over 2}+1 & \le & \sum_{i=1}^{|B|} |R|+2-i,\\}
{|R| \cdot (|R|+1) \over 2} +1& \le & |B| \cdot (|R|+2) - \sum_{i=1}^{|B|} i,\\
\sholong{}{{|R| \cdot (|R|+1) \over 2} +1& \le & |B| \cdot (|R|+2) - {|B|\cdot |B|+1 \over 2},\\}
{|R| \cdot (|R|+1) \over 2} +1& \le & (n-|R|) \cdot (|R|+2) - {(n-|R|)\cdot (n-|R|)+1 \over 2},\\
|R| & \le & {n+ \sqrt{n-1}-1\over 2}.
\end{eqnarray*}

\noindent which concludes the proof.
\end{proof}

\subsection{Independent lines in $\Hlinecell$}\label{sec_indep}

Recall that an independent set of lines in an arrangement is defined as a subset of lines $S$ so that no cell of the arrangement is only adjacent to lines in $S$.

\begin{theorem}\label{th:is}
For any set $L$ of $n$ lines, the corresponding $\Hlinecell$ hypergraph has an independent set of size $\sqrt{n}/2$.
\end{theorem}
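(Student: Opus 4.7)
The plan is to construct $S$ greedily. Start with $S=\emptyset$ and, while possible, add to $S$ any line $\ell\in L\setminus S$ such that $S\cup\{\ell\}$ is still an independent set of $\Hlinecell$, i.e., no cell of $\mathcal{A}$ has all its bounding lines in $S\cup\{\ell\}$. The goal is to show that this procedure cannot terminate before $|S|$ reaches $\sqrt{n}/2$.

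Suppose the procedure halts with $|S|=k$. Then every $\ell\in L\setminus S$ admits a \emph{witness cell} $c_\ell$ of $\mathcal{A}$ bounded only by $\ell$ and lines of $S$, with $\ell$ on its boundary. The witnesses are pairwise distinct, because $c_\ell$ has exactly one bounding line in $L\setminus S$, namely $\ell$ itself. Hence $\mathcal{A}$ contains at least $n-k$ distinct cells whose boundary uses exactly one line of $L\setminus S$.

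The heart of the proof is then to upper-bound the number of such cells by $O(k^2)$, from which $n-k=O(k^2)$ gives $k\ge \sqrt{n}/2$. The key structural observation is that a witness cell $c_\ell$ is a cell of the sub-arrangement $\mathcal{A}_{S\cup\{\ell\}}$, so its $\ell$-side is a full edge of $\ell$ in $\mathcal{A}_{S\cup\{\ell\}}$---a segment between two consecutive intersections of $\ell$ with lines of $S$. The two lines of $S$ meeting $\ell$ at the endpoints of this segment are the $S$-lines neighboring the $\ell$-edge, and they determine $c_\ell$ up to a choice of side. We then count the witnesses by the pair of $S$-lines anchoring the $\ell$-segment: each unordered pair $\{s,s'\}\subseteq S$ can anchor only a bounded number of witness cells (those lying in the four cells of $\mathcal{A}$ adjacent to the vertex $s\cap s'$ that are consistent with the structural constraints), giving a total of $O(k^2)$ candidates across the $\binom{k}{2}$ pairs from $S$.

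The main obstacle I anticipate is making this double counting precise enough to yield the stated constant $1/2$, rather than just $\Omega(\sqrt n)$. A single witness cell may have several pairs of $S$-lines on its boundary, and witnesses with larger boundary size (four or more sides, or unbounded cells) must be handled separately; the zone theorem applied to $\ell$ in $\mathcal{A}_{S\cup\{\ell\}}$ bounds the number of cells adjacent to $\ell$ by $O(k)$, which will be the main lever to control over-counting across the $n-k$ choices of $\ell$.
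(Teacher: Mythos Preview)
Your overall plan coincides with the paper's: take a maximal independent set $S$ of size $k$, assign each $\ell\in L\setminus S$ a witness cell $c_\ell$ bounded only by $\ell$ and lines of $S$, observe that these witnesses are pairwise distinct, and bound their number by $O(k^2)$ via a charging to pairs of $S$-lines.

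The gap is in your specific charging. You anchor $c_\ell$ by the pair $\{s,s'\}$ of $S$-lines meeting $\ell$ at the two endpoints of the $\ell$-edge of $c_\ell$, and then assert that $c_\ell$ must be one of the four cells of $\mathcal A$ incident to the vertex $s\cap s'$. But $s\cap s'$ is a vertex of $c_\ell$ only when $c_\ell$ is a triangle: if $c_\ell$ has four or more sides, the boundary walk from the $s$-side to the $s'$-side passes through other $S$-lines, and $s\cap s'$ lies strictly outside $c_\ell$. So the ``four cells at $s\cap s'$'' bound does not apply, and many distinct witness cells can share the same anchor pair. You notice this (``larger boundary size must be handled separately''), but the zone-theorem idea you float only controls the number of cells along a \emph{single} line $\ell$; it does not by itself give a global $O(k^2)$ bound on the set of all witnesses.

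The paper's fix is much simpler and avoids the zone theorem entirely. If $c_\ell$ has size at least $3$, then at least two of its boundary sides lie on $S$-lines, and in particular two \emph{consecutive} boundary sides do; their common endpoint is a genuine vertex of $c_\ell$ that is the intersection of two $S$-lines. Charge $\ell$ to one of the four quadrants at that intersection. The quadrant determines the incident cell of $\mathcal A$, hence $c_\ell$, hence $\ell$ (the unique non-$S$ line on its boundary), so each of the $4\binom{k}{2}$ quadrants is charged at most once. If $c_\ell$ has size $2$, charge $\ell$ to $c_\ell$ itself; again each such cell is charged at most once. This gives
\[
n-k \;\le\; 4\binom{k}{2} + \frac{n}{2},
\]
and solving yields $k \ge \tfrac14\bigl(\sqrt{4n+1}+1\bigr) \ge \sqrt{n}/2$. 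The only adjustment you need is to anchor at a vertex of $c_\ell$ formed by two $S$-lines, rather than at the endpoints of the $\ell$-segment.
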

\begin{proof}
We prove that any (inclusionwise) maximal independent set has size $\Omega (\sqrt{n})$. Consider such a maximal independent set $I\subset L$ of size $i$. 
By maximality, each line $\ell\in L\setminus I$ can be associated with at least one cell of $L$ whose boundary consists only of one segment of $\ell$, and segments of  lines in $I$. We choose one such cell for each line $\ell\in L\setminus I$, and call this cell $c_{\ell}$. Then, we consider the set $Q$ of quadrants defined by the intersections of the lines in $I$, each intersection defining 4 quadrants. If $c_{\ell}$ has size at least 3 for some $\ell\in L\setminus I$, then we charge $\ell$ to one of the quadrant of $Q$ formed by the intersection of two lines of $I$ on the boundary of $c_{\ell}$. If $c_{\ell}$ has size two, then we charge $\ell$ to $c_{\ell}$. Note that, by definition of $c_{\ell}$, no quadrant can be charged more than once. Similarly, no cell of size two can be charged more than once. Hence the number of lines in $L\setminus I$ cannot exceed the sum of the number of quadrants and the number of cells of size two:
\begin{eqnarray*}
|L\setminus I| & \leq & |Q| + \frac n2, \\
n - i & \leq & 4{i\choose 2} + \frac n2, \\
i & \geq & \frac 14 \left(\sqrt{4n + 1} + 1\right) \geq \frac{\sqrt{n}}2 . 
\end{eqnarray*}
\end{proof}

\begin{theorem}\label{th:ubis}
Given a set $L$ of $n$ lines, an independent set of the corresponding $\Hlinecell$ hypergraph has size at most $2n / 3$.
\end{theorem}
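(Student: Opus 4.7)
The plan is a short double-counting argument carried out on the sub-arrangement $\mathcal{A}_S$ formed by the lines of a candidate independent set $S \subseteq L$. The key move is a structural observation that turns independence in $\Hlinecell$ into a covering condition on the cells of $\mathcal{A}_S$, after which an elementary incidence count finishes the job.

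The structural step I would establish first is the following: if $S$ is an independent set in $\Hlinecell$, then every cell of $\mathcal{A}_S$ is pierced by at least one line of $L \setminus S$. The reason is immediate from the definitions. A cell $c'$ of $\mathcal{A}_S$ is bounded solely by lines of $S$, so if no line of $L \setminus S$ met its interior, $c'$ would persist unchanged as an actual cell of the full arrangement $\mathcal{A}$, and its boundary would consist only of lines of $S$, contradicting independence. This is the only nontrivial observation in the proof, and I expect it to be the main obstacle only in the sense of spotting it; the verification is a line.

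The counting step then follows quickly. Since the sub-arrangement $\mathcal{A}_S$ is simple, it has exactly $|S|(|S|+1)/2 + 1$ cells. Each line $\ell \in L \setminus S$ is cut by its $|S|$ crossings with $S$ into $|S|+1$ maximal pieces, and since the cells of $\mathcal{A}_S$ are convex, each piece lies in a distinct cell, so $\ell$ meets at most $|S|+1$ cells of $\mathcal{A}_S$. Summing over the $n - |S|$ lines of $L \setminus S$ and applying the structural step gives
\[
(n - |S|)(|S|+1) \;\geq\; \tfrac{|S|(|S|+1)}{2} + 1,
\]
which rearranges to $n \geq \tfrac{3}{2}|S| + \tfrac{1}{|S|+1}$, hence $|S| \leq 2n/3$, as claimed.
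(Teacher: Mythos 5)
Your proposal is correct and follows essentially the same route as the paper's proof: both reduce independence to the observation that every cell of $\mathcal{A}_S$ must be crossed by a line of $L\setminus S$, count $|S|(|S|+1)/2+1$ cells against $|S|+1$ cells per outside line, and rearrange to get $|S|\le 2n/3$. The only difference is that you spell out the justification of the structural step, which the paper states without proof.
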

\begin{proof}
Let $S$ be an independent set of lines in $\Hlinecell$. This means that, in the corresponding arrangement $\mathcal{A}_{S}$, each cell is touched by at least a line $\ell \in L \setminus S$. Each line $\ell \in L\setminus S$ crosses $|S|+1$ cells of $\mathcal{A}_{S}$. There are ${|S| \cdot (|S|+1) \over 2}+1$ cells in total, and thus

$$|L \setminus S| = n - |S| \ge {|S| \cdot (|S|+1) +2 \over 2 (|S|+1)} $$
$$n \ge {3|S|\over 2}+{1\over |S|+1}$$

\noindent and therefore we conclude that $|S| < {2n/3}$.
\end{proof}

\subsection{Chromatic number of $\Hlinecell$}
In this section, we study the problem of coloring the $\Hlinecell$ hypergraph. That is, we want to color the set $L$ so that no cell is monochromatic. We start by giving an upper bound on the required number of colors.

\begin{theorem}\label{UB:chrom}
Any arrangement of $n$ lines can be colored with at most $2\sqrt{n}+O(1)$ colors so that no edge of the associated $\Hlinecell$ hypergraph is monochromatic.
\end{theorem}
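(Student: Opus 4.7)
The plan is to iteratively peel off large independent sets from $\Hlinecell$, each becoming a color class. Formally, set $L_0 := L$, and at each round $k \geq 0$ apply Theorem~\ref{th:is} to the sub-arrangement $\mathcal{A}(L_k)$ to obtain an independent set $I_k \subseteq L_k$ of its line-cell hypergraph, of size at least $\sqrt{|L_k|}/2$. Assign color $k$ to all lines of $I_k$ and set $L_{k+1} := L_k \setminus I_k$, iterating until $L_T$ is empty; the total number of rounds $T$ is the number of colors used.

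First I would verify that the resulting coloring is proper for the full hypergraph $\Hlinecell(\mathcal{A}(L))$, i.e., that each $I_k$ is also independent with respect to the cells of the \emph{original} arrangement and not merely those of $\mathcal{A}(L_k)$. The key observation is that if some cell $c$ of $\mathcal{A}(L)$ were bounded only by lines of $I_k \subseteq L_k$, then no line of $L_k \setminus I_k$ could cross the interior of $c$ (otherwise $c$ would be subdivided and fail to be a cell of $\mathcal{A}(L)$), and by simplicity of $\mathcal{A}$ no such line could pass through a vertex of $\partial c$; hence $c$ would also be a cell of $\mathcal{A}(L_k)$ bounded only by $I_k$-lines, contradicting the independence of $I_k$ there.

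Next I would bound $T$ by analyzing the recurrence $|L_{k+1}| \leq |L_k| - \lceil \sqrt{|L_k|}/2 \rceil$. Setting $s_k := \sqrt{|L_k|}$ gives $s_{k+1}^2 \leq s_k^2 - s_k/2$, hence $s_{k+1} \leq s_k - \Theta(1)$. A careful discrete summation, stopping the iteration at the first round where $|L_k| = O(1)$ and then assigning one extra color to each of those few remaining lines at an additive $O(1)$ cost, should yield the claimed $T \leq 2\sqrt{n} + O(1)$.

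The main obstacle I anticipate is extracting the \emph{tight} leading constant $2$. A naive continuous estimate (treating the recurrence as $ds/dk \approx -1/4$) yields only $T \leq 4\sqrt{n}$, off by a factor of two. Closing this gap plausibly requires either a careful discrete bookkeeping that uses the precise lower bound $(\sqrt{4|L_k|+1}+1)/4$ of Theorem~\ref{th:is} rather than its asymptotic form, or the observation that the hypergraph truly relevant at round $k$ — cells of $\mathcal{A}(L)$ whose boundaries lie entirely within $L_k$ — is a proper sub-hypergraph of $\Hlinecell(\mathcal{A}(L_k))$ and therefore admits larger independent sets than Theorem~\ref{th:is} alone guarantees.
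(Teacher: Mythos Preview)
Your approach is exactly the paper's: iteratively extract an independent set of size at least $\sqrt{|L_k|}/2$ via Theorem~\ref{th:is}, make it a color class, and finish by giving individual colors to the last $O(1)$ lines. Your correctness argument (an independent set in $\Hlinecell(\mathcal{A}(L_k))$ remains independent in $\Hlinecell(\mathcal{A}(L))$) is likewise the paper's.

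Your worry about the leading constant is not only legitimate, it is a genuine gap that the paper does \emph{not} close either. The paper's inductive step asserts
\[
2\sqrt{\,n-\tfrac{\sqrt n}{2}\,}\;\le\;2(\sqrt n-1),
\]
equivalently $\sqrt{n-\sqrt n/2}\le \sqrt n-1$; squaring gives $n-\sqrt n/2\le n-2\sqrt n+1$, i.e.\ $\tfrac32\sqrt n\le 1$, which fails for every $n\ge 1$. (Correspondingly, the threshold $n_0$ defined there by $\sqrt{n_0}>1+\sqrt{n_0-\sqrt{n_0}/2}$ does not exist.) The recurrence $|L_{k+1}|\le |L_k|-\sqrt{|L_k|}/2$ honestly yields $T\le 4\sqrt n+O(1)$, exactly your continuous estimate, and neither of your two proposed fixes helps: the sharper bound $\tfrac14(\sqrt{4n+1}+1)$ from Theorem~\ref{th:is} is still $\sqrt n/2+O(1)$, and your second observation, while correct, would require a new and stronger independent-set lemma that neither you nor the paper supplies. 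So the method as stated gives $O(\sqrt n)$ colors with constant~$4$; the constant~$2$ in the theorem is not supported by the argument.
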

\begin{proof}
Our coloring scheme is as follows: select the largest independent set $I$, color all the lines of $I$ with the same color, remove $I$ from $L$. We now iterate on the remaining lines, where in each step of the algorithm a different color is assigned to the lines we remove. The algorithm stops whenever the number of non-colored lines is at most $n_0$ (the exact value of $n_0$ will be determined afterwards). Whenever $n_0$ or fewer lines remain, we complete the coloring by adding a new color to each of the remaining lines. 

In order to show that this method provides a proper coloring, first observe that any independent set of $L' \subseteq L$ is also an independent set of $L$. That is, any set of lines with the same color assigned form an independent set of $L$. In particular, there cannot exist a cell $c$ in which all lines adjacent to $c$ have the same color assigned.

Let $c(n)$ be the maximum number of colors needed for an arrangement of $n$ lines. We will prove that $c(n)=2\sqrt{n}+O(1)$ using induction. Recall that, by Theorem \ref{th:is}, the size of a maximal independent set is at least $\sqrt{n}/2$. Let $n_0$ be the smallest integer such that $\sqrt{n_0}>1+\sqrt{n_0-\sqrt{n_0}/2}$. Our coloring strategy gives the following recursion for any $n\geq n_0$. 

$$ c(n)\leq c(n-k\sqrt{n})+1 \leq 2\sqrt{n-\sqrt{n}/2}+O(1)+1 \leq 2(\sqrt{n}-1) +O(1)+1 < 2\sqrt{n} +O(1)$$
\end{proof}

We now construct a slightly sublogarithmic lower bound for the chromatic number of $\Hlinecell$:

\begin{theorem}
\label{LB:chroma}
There exists an arrangement of $n$ lines whose corresponding hypergraph $\Hlinecell$  has chromatic number $\Omega(\log n /\log \log n)$.
\end{theorem}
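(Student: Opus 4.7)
My approach is a recursive construction. For each positive integer $c$, I will build an arrangement $\mathcal{A}_c$ of $n_c$ lines whose line-cell hypergraph has chromatic number exceeding $c$, with the recurrence ensuring $n_c = c^{O(c)}$. Inverting, $c = \Omega(\log n / \log \log n)$, which is exactly the claim.

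The base case is $\mathcal{A}_1$: any triple of lines in general position, whose triangular bounded cell forces $\chi \geq 2$. For the inductive step, given $\mathcal{A}_c$ with $\chi(\Hlinecell(\mathcal{A}_c)) > c$, I form $\mathcal{A}_{c+1}$ by taking roughly $c+1$ isomorphic copies of $\mathcal{A}_c$, each projectively compressed into a tiny disjoint convex region of the plane so that its internal cells are preserved inside the union. The lines of different copies extend to infinity and meet, creating additional \emph{bridge} cells whose boundaries span two or more copies. The role of the bridges is to couple the colourings of the copies, so that the full arrangement cannot be merely a disjoint $c$-colouring of each copy independently.

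The heart of the argument is the bridging step. Given any $c$-colouring of $\mathcal{A}_{c+1}$, each copy receives a colouring using at most $c$ colours; if some copy uses at most $c-1$ distinct colours, the inductive hypothesis immediately yields a monochromatic cell inside that copy. So I only need to rule out the case that \emph{every} one of the $c+1$ copies uses all $c$ colours. By a pigeonhole argument on the $c+1$ colour-profiles drawn from a palette of size $c$, two copies must agree on which colour is assigned to ``corresponding'' lines in some structural sense; the bridges between these two copies are designed in advance so that this coincidence yields a monochromatic bridge cell. The recursion $n_{c+1} \leq (c+1) \cdot n_c + O(1)$ then gives $n_c \leq c!$, as desired.

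The main obstacle I anticipate is the geometric realisation of the bridges: one must verify that (i) the combined arrangement remains simple, (ii) the projective shrinking truly preserves each copy's internal cells (no line of another copy crosses the enclosing region), and (iii) the bridges can be tuned so that the pigeonhole coincidence creates a cell whose entire boundary lies in the shared colour. If this direct construction turns out to be too rigid to carry through, the fallback is a probabilistic argument on a random arrangement of $n$ lines, combining Theorem~\ref{th:is}'s structural bound on independent sets with a first-moment count showing that the expected number of proper $c$-colourings drops below one once $c$ reaches the stated order.
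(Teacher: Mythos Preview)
Your proposal has the right overall shape (a recursive blow-up forcing chromatic number $>c$ on $c^{O(c)}$ lines), but the inductive step as written does not go through, and the gap is exactly at the place you flag as ``the heart of the argument.''

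First, there is an index slip: to advance from $\chi(\mathcal{A}_c)>c$ to $\chi(\mathcal{A}_{c+1})>c+1$ you must analyse a $(c{+}1)$-colouring of $\mathcal{A}_{c+1}$, not a $c$-colouring; and the inductive hypothesis then disposes of any copy using at most $c$ colours, not $c{-}1$. With that corrected, the hard case is that every one of the $c{+}1$ copies is \emph{properly} $(c{+}1)$-coloured. Here your pigeonhole on ``colour-profiles'' fails: a colour-profile of a copy of $\mathcal{A}_c$ is a map from $n_c$ lines to $c{+}1$ colours, so there are $(c{+}1)^{n_c}$ of them, and $c{+}1$ copies cannot force a repeat. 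Even if you only recorded the colour of one distinguished line per copy, $c{+}1$ copies over $c{+}1$ colours still need not collide. And even granting a collision, you have not specified any bridge cell whose entire boundary consists of the two ``matching'' lines; a bridge between compressed copies will generically be bounded by several lines of each copy, whose colours you do not control.

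The paper's argument avoids this by \emph{not} carrying a monochromatic cell through the induction. Instead it introduces \emph{witness sets} of vertical snapshots: a linear-size set $\mathcal{W}=\{\pi_1,\dots,\pi_m\}$ with the property that any two lines are consecutive in some $\pi_j$. The invariant carried through level $i$ is ``either there is already a monochromatic cell, or some copy contains two same-coloured lines that are \emph{consecutive at some snapshot} $\pi_j$ with $j>i$.'' At each level one takes $k{+}1$ copies of the previous arrangement; the pigeonhole is then on the \emph{single colour} of each copy's monochromatic pair ($k{+}1$ pairs, $k$ colours), and the construction is engineered so that two same-coloured consecutive pairs from different copies bound a quadrilateral. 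This is the missing idea: the object that survives pigeonhole is a coloured \emph{pair}, not a full profile, and the bridge is a concrete $4$-cell determined in advance by the snapshot structure. The recursion is over the $m\le 2(k{+}1)$ witness positions, not over the chromatic number itself, which is why the final size is $(k{+}1)^{O(k)}$.

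As for the fallback: Theorem~\ref{th:is} gives a \emph{lower} bound on the size of a maximum independent set, which pushes the chromatic number \emph{down}, not up; a first-moment count over random arrangements would need an entirely separate structural input to yield the claimed lower bound.
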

The proof of this claim is constructive. In the following we construct a set of (roughly) $k^k$ lines, in which any $k$-coloring will contain a monochromatic cell (for any $k>0$).  Since we are interested in the asymptotic behavior, it suffices to prove for the case in which $k+1$ is a power of two (that is, $k+1=2^q$ for some $q\geq 0$). In order to proceed with the proof, we first introduce some definitions and helpful results.

For any $x_0\in \mathbb{R}$ we consider the order in which we traverse the lines of $L$ in the vertical line $x=x_0$ from top to bottom. Although the permutation obtained will depend on $L$ and $x_0$, there will be exactly $n\choose 2$ different permutations in any set $L$ of $n$ lines. Let $\mathcal{A}_L$ be the set of different permutations that we can obtain. Each of these permutations is called a {\em snapshot} of $L$.

With this definition we can give an intuitive idea of our construction. Consider any coloring with $k$ colors of a set of $k+1$ lines. By the pigeonhole principle there will be two lines with the same assigned color. Moreover, since the two lines must cross, these two lines must be consecutive in the ordering given by some snapshot. Our approach is to cross these two lines with a second pair of lines with the same color assigned, hence obtaining a monochromatic quadrilateral. The main difficulty of the proof is that the line set must satisfy this property for any $k$-coloring of $L$. In particular, we do not know at which snapshot will the two lines of the same color meet.

We say that a set of snapshots $\mathcal{W} \subseteq \mathcal{A}_L$ is a {\em witness} set of $L$ if, for any two lines $\ell,\ell'\in L$, there exists a snapshot $\pi\in \mathcal{W}$ in which the two lines appear consecutively in $\pi$. 
 It is easy to see that the whole set $\mathcal{A}_L$ is a witness sets of quadratic size for any set of lines. Since the size of the witness set has a direct impact on our bound, we first show how to construct a witness set of smaller size:

\begin{lemma}\label{LB:basicgad}
For any $q\geq 0$ there exists a set $L$ of $2^q$ lines and a witness set $\mathcal{W}$ such that $|\mathcal{W}|\leq  2^{q+1}$.
\end{lemma}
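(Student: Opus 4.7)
The plan is to realize the $n=2^q$ lines as slight perturbations of tangents to a parabola, and to witness every pair using snapshots taken at a regularly spaced family of vertical lines.

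Concretely, I would let $\ell_i$ (for $i=1,\dots,n$) be the tangent to $y=x^2$ at the point $(t_i,t_i^2)$, where each $t_i$ is a small generic perturbation of the integer $i$ chosen so that the $\binom{n}{2}$ intersection abscissae are all distinct. Then $\ell_i$ has equation $y=2t_ix-t_i^2$, and at any vertical line $x=X$ one has $y_i(X)=-(t_i-X)^2+X^2$, so the top-to-bottom order of the lines is determined by increasing $|t_i-X|$. I would take $\mathcal{W}$ to be the set of snapshots at $x=k-\mu$ for $k=2,3,\dots,n$, where $\mu>0$ is small and generic so as to avoid all intersection abscissae. This yields $|\mathcal{W}|=n-1\le 2^{q+1}$, as required.

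The combinatorial core is that at $x=k-\mu$ the snapshot reads, from top to bottom, $k,\ k-1,\ k+1,\ k-2,\ k+2,\ \dots$ (the asymmetry placing $k-m$ above $k+m$ comes from the $-\mu$ term in $y_i\approx k^2-(i-k)^2-2i\mu$), truncated whenever an index leaves $\{1,\dots,n\}$. Given a pair $\{\ell_i,\ell_j\}$ with $i<j$ and $s=i+j$, I would choose $k=s/2$ if $s$ is even and $k=(s+1)/2$ if $s$ is odd, which puts $k$ in $\{2,\dots,n\}$. If $s=2k$, then $i$ and $j$ lie at the same distance from $k$ and occupy consecutive positions within a single ``distance group'' of the snapshot; if $s=2k-1$, they lie at distances differing by one and occupy consecutive positions straddling two neighbouring distance groups. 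In both cases $\ell_i$ and $\ell_j$ are adjacent in the snapshot at $x=k-\mu$, so $\mathcal{W}$ is a witness set.

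The main obstacle is the treatment of boundary cases: when $k$ is close to $2$ or to $n$, some of the nominal positions $k\pm m$ fall outside $\{1,\dots,n\}$ and the ideal distance-group ordering is truncated. A short case analysis (for instance, for pairs involving $\ell_1$ or $\ell_n$) is needed to verify that truncation only removes entries at distances strictly larger than that of the pair under consideration, and hence preserves the adjacency we need. This step is routine but must be checked explicitly, as it is the one point where the clean nested structure of the parabola construction could potentially break down.
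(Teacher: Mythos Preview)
Your argument is correct and takes a genuinely different route from the paper's.  The paper builds $G_{2^q}$ recursively: two flattened copies of $G_{2^{q-1}}$ are crossed in a grid, and the witness set is obtained by reusing the sub-gadget's witnesses together with $2^q-2$ new snapshots threading the grid; the recurrence $|\mathcal W_{2^q}|\le (2^q-2)+|\mathcal W_{2^{q-1}}|$ then gives the bound $2^{q+1}$.  You instead give a single explicit arrangement (tangents to a parabola at near-integer abscissae) and a single explicit family of $n-1$ snapshots, checking adjacency pair by pair via the formula $y_i(X)=X^2-(t_i-X)^2$.  Your approach is more elementary and in fact yields the sharper bound $|\mathcal W|=n-1$ (versus roughly $2n$), which would shave a constant factor off the exponent in the subsequent construction of $L^{(m)}$ in Theorem~\ref{LB:chroma}; the paper's recursive picture, on the other hand, foreshadows the doubling-and-crossing mechanism used there.

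Your stated ``main obstacle'' is not actually an obstacle: once you know $\ell_i$ and $\ell_j$ occupy consecutive positions in the \emph{ideal} infinite sequence $k,\,k-1,\,k+1,\,k-2,\,k+2,\dots$, deleting out-of-range entries can only collapse the sequence, never insert anything between two already-adjacent terms.  Since $1\le i<j\le n$ guarantees that $i$ and $j$ themselves survive, adjacency is automatic and no case analysis is needed.  (Concretely, in the even case every ideal entry at distance $\le m$ lies in $[i,j]\subseteq\{1,\dots,n\}$; in the odd case the same holds for distances $\le a+1$ on the low side and $\le a$ on the high side.)  A minor stylistic point: the perturbation of the $t_i$ is unnecessary, since three tangents to a parabola are never concurrent (their pairwise intersections have abscissae $(t_i+t_j)/2$, which cannot all coincide), so the unperturbed arrangement is already simple; it suffices to pick $0<\mu<\tfrac12$.
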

\begin{proof}
We construct the arrangement by induction on $q$ (recall that we assumed $k+1=2^q$ for some $q\geq 0$). For $q=0$ our base gadget $G_1$ simply consists of a single line. Note that the witness property is trivially true, since there don't exist two distinct lines in $G_1$, hence we define $\mathcal{W}_1=\emptyset$.

As we are only interested in the ordering in which lines are crossed, we can do any transformation to a set $L$ of lines, provided that  transformation preserves the permutations in the set $\mathcal{A}_L$. If we update the coordinates of the snapshots in the witness set accordingly, the witness property will still hold. In particular, we can transform a set $L$ of lines so that they become almost parallel and have any desired slope. We call this operation the {\em flattening} of $L$.

With this operation in mind we can do the induction step as follows: for any $q>0$ generate a copy of gadget $G_{2^{q-1}}$ and flatten the lines so that they all have small positive slope and all crossings between lines occur below the horizontal line $y=0$. Let $G'$ be the transformed set of lines and $G''$ be the reflexion of $G'$ with respect to line $y=0$. Gadget $G_{2^q}$ is defined as the union of $G'$ and $G''$ (see Figure \ref{fig_basegadget}).

Observe that $G_{q}$ satisfies the following properties:
\begin{figure}
   \begin{center}
     \includegraphics[width=0.7\textwidth]{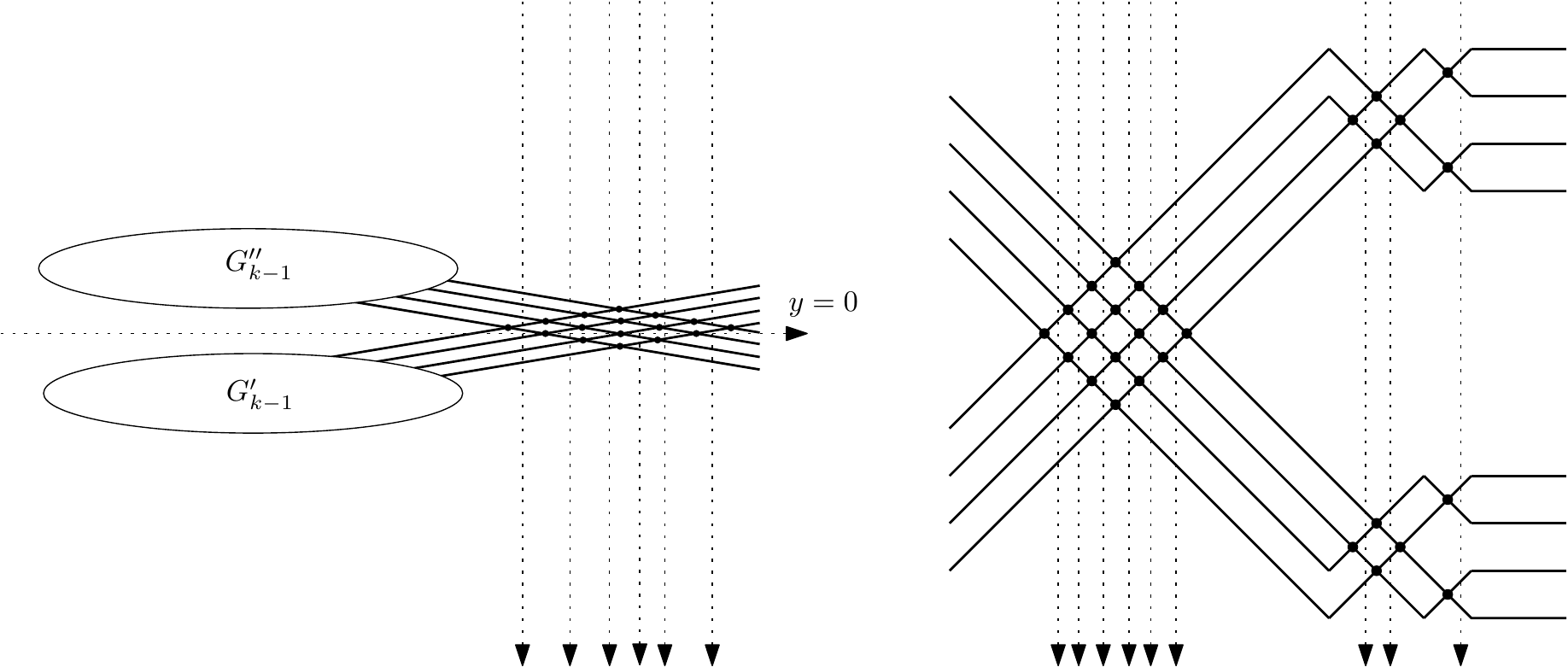}
     \caption{Induction step in the gadget $G_{2^q}$ construction (left). The additional snapshots are also shown  (dashed vertical lines). The generated arrangement for $q=8$ and its witness set is shown in the right. For clarity lines of $G_8$ have been depicted as pseudolines.}
     \label{fig_basegadget}
   \end{center}
\end{figure}

\begin{itemize}
\item[] $(i)$ Gadget $G_{2^q}$ has size exactly $2^q$. Moreover, any two lines cross exactly once.
\item[] $(ii)$ The witness set $\mathcal{W}_{2^{q-1}}$ of $G'$ also acts as witness set of $G''$.
\item[] $(iii)$ The lines of $G'$ and $G''$ intersect in a grid-like fashion, forming cells of size $4$ and $2$. 
\end{itemize}

Observe that property $(i)$ certifies that the construction is a valid set of lines, while properties $(ii)$ and $(iii)$ help us obtain a witness set $\mathcal{W}_{2^q}$ of small size; the crossing between lines of different gadgets  can be guarded with $n-2=2^q-2$ lines (see Figure \ref{fig_basegadget}). Moreover, the crossings between lines of the same gadget can be guarded by the witness set $\mathcal{W}_{2^{q-1}}$ (which by induction satisfies $|\mathcal{W}_{2^{q-1}}|\leq 2\times 2^{q-1} =2^q$). By construction we have that $|\mathcal{W}_{2^q}|= (2^q-2)+ |\mathcal{W}_{2^{q-1}}| \leq 2^q + 2^q =2^{q+1}$.

To finish the proof we must show that $\mathcal{W}_{2^q}$ is indeed a witness set of $L$: Let $\ell,\ell'$ be any two lines of $L$. If these lines belong to the same sub-gadget, we can apply induction and obtain that they must be consecutive in one of the first snapshots. Otherwise, the two lines belong to different sub-gadgets of $G_{2^q}$, hence they will be consecutive at the latter snapshots.
\end{proof}

With the preceding result we can now prove Theorem \ref{LB:chroma}:

\begin{proof}
Let $L^{(0)}$ be the set of lines constructed in Lemma \ref{LB:basicgad} and let $\mathcal{W}$ be the witness set of $L^{(0)}$ (recall that we have $|L^{(0)}|=k+1=2^q$ and $|\mathcal{W}|=m$ for some $m\leq 2(k+1)$). Also, let $\pi_1,\ldots,\pi_m$ be the snapshots of $\mathcal{W}$, sorted from left to right. Consider now any coloring of $L^{(0)}$ with $k$ colors. By the pigeonhole principle, there must exist two lines $\ell,\ell'$ with the same color assigned. Since $\mathcal{W}$ is a witness set, these two lines must be consecutive at some snapshot $\pi \in \mathcal{W}$. Whenever this happens, we say that $\ell$ and $\ell'$ form a {\em monochromatic consecutive pair} at snapshot $\pi$.

In the following, we generalize the above construction to a set $L^{(i)}$ of size $(k+1)^{i+1}$ for any $i \in \{0,\ldots, m\}$. The key property is that in any $k$-coloring of the set $L^{(i)}$, either there is a monochromatic cell or there exist two lines that form a monochromatic consecutive pair at some snapshot $\pi_j$ (for $m\leq j>i$). In particular, notice that the second condition cannot occur for set $L^{(m)}$, hence  there must exist a monochromatic cell in any $k$-coloring of $L^{(m)}$.

We construct the set $L^{(i)}$ by induction on $i$. For $i=0$ the claim is true by Lemma \ref{LB:basicgad}, hence we can focus on the inductive step. For any $i>0$, we construct $L^{(i)}$ with $k+1$ different copies of set $L^{(i-1)}$ flattened so that they satisfy the following properties:

\begin{itemize}
\item $(i)$ For any $j\in \{i,\ldots m\}$, The snapshot of each copy $L^{(i-1)}$ at coordinate $x=j$ is $\pi_j$.
\item $(ii)$ No two lines of the same copy  of $L^{(i-1)}$ cross in the vertical strip $\{-m<x<i\}$. In particular,  the snapshot taken at any coordinate of the strip is $\pi_i$.
\item $(iii)$ Lines of two different copies of $L^{(i-1)}$ cross in the  vertical strip $\{0<x<i\}$ in a grid-like fashion. In particular any two lines that are consecutive in $\pi_i$ form a quadrilateral with other two consecutive lines of another copy of $L^{(i-1)}$.
\end{itemize}
This construction can be done by flattening all the copies of $L^{(i-1)}$ so that each copy essentially becomes a thick line, and placing the different copies in convex position\sholong{.}{(see Figure \ref{fig_induc}). We define the set $\mathcal{W}^{(i)}$ as the set $\{\pi_{i+1}, \ldots, \pi_m\}$ (that is, we remove $\pi_i$ from $\mathcal{W}^{i-1}$).}
 Observe that, since $L^{(i)}$ is composed of $k+1$ different copies of $L^{(i-1)}$, we indeed have $|L^{(i)}|=(k+1)|L^{(i-1)}|=(k+1)^{i+1}$.\sholong{}{Moreover, the size of $\mathcal{W}^{(i)}$ decreases by one in each iteration, hence $|\mathcal{W}^{(i)}|=m-i$.}

In order to complete the proof we must show that, in any coloring $c$ of $L^{(i)}$, we either have a monochromatic cell or a  monochromatic consecutive pair in $\pi_j$ (for some $m\leq j>i$). Apply induction to the different copies of $L^{(i-1)}$: if at least one of the copies has a monochromatic cell or has its monochromatic consecutive pairs at snapshot $\pi_j$ (for some $j>i$) we are done, since the same property will hold for $L^{(i)}$. The other case occurs when all copies of $L^{(i-1)}$ have their monochromatic consecutive pair at snapshot $\pi_i$. Let $\ell_j,\ell'_j$ be the monochromatic consecutive pair of the $j$-th copy of $L^{(i-1)}$ and let $c_j$ be its color. By the pigeonhole principle, there must be two distinct indices $u,v\leq k$ such that $c_u=c_v$. By property $(iii)$ of our construction, the lines $\ell_u,\ell'_u,\ell_v,\ell'_v $ form a quadrilateral in the arrangement of lines of $L^{(i)}$. The quadrilateral will be monochromatic, since by definition the four lines have the same color assigned.
\end{proof}

\subsection{Other coloring results}
For the sake of completeness, we end this section by stating two easy results on coloring vertices or cells instead of lines.

\begin{theorem}\label{tight:chromvertex}
The chromatic number of $\Hvertexcell$ is 2.
\end{theorem}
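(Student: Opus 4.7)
The lower bound $\chi \ge 2$ is immediate, since at least one cell of the arrangement has two or more vertices on its boundary and therefore cannot be properly $1$-colored. The substance of the proof is the upper bound, for which I plan to exhibit an explicit proper $2$-coloring of the intersection points $V$.

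I would start from the classical proper $2$-coloring of the cells themselves: assign each cell a color in $\{0,1\}$ so that cells sharing an edge receive opposite colors, which can be computed by orienting each line and taking, for each cell, the parity of the number of lines on whose positive side it lies. At any intersection $v=\ell_i\cap\ell_j$ the four surrounding cells form a local checkerboard, with the two diagonally opposite cells sharing a color.

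Next, I would define the color of each vertex by a canonical rule: fix line orientations (say, each normal pointing upward), and for each vertex $v=\ell_i\cap\ell_j$ let $c^\star(v)$ be the cell in the quadrant on the positive side of both $\ell_i$ and $\ell_j$; set the color of $v$ to be the color of $c^\star(v)$.

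To verify properness, I would trace the boundary of an arbitrary cell $c$. At each vertex $v$ of $c$, the cell $c$ itself occupies one of the four quadrants of $v$, and the color of $v$ matches that of $c$ precisely when $c$ lies on the same diagonal as the canonical $(+,+)$ quadrant. For a bounded cell, the leftmost and rightmost vertices lie on the opposite diagonal---forced by the fact that at these vertices one incident edge belongs to an ``upper'' boundary of $c$ and the other to a ``lower'' boundary---while the remaining boundary vertices lie on the same diagonal as the canonical quadrant; both kinds of vertex exist as soon as $c$ has at least three vertices, ensuring that both colors appear among them.

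The main obstacle is the unbounded cells. A cell whose interior lies entirely on the positive side of all of its incident lines occupies the $(+,+)$ quadrant at every one of its vertices, and under the canonical rule all of its vertices receive the same color as $c$ itself. Handling this case---by restricting the hypergraph to cells with at least two vertices and adjusting the canonical quadrant locally for such extremal cells, or by observing that in a simple arrangement the ``all-positive'' asymptotic region typically reduces to a single-vertex wedge for which the chromatic condition is vacuous---is where the bulk of the case analysis will go.
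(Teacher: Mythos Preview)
Your route via the face $2$-coloring and a canonical quadrant is not the paper's, and the unbounded-cell obstacle you raise is a genuine gap rather than a detail to be tidied up. With all normals pointing upward, the ``all-positive'' cell is precisely the region above the upper envelope of the $n$ lines; this cell is bounded by every line that appears on that envelope and therefore has up to $n-1$ vertices, \emph{each} of which sits in the $(+,+)$ quadrant under your rule. Your coloring then makes that entire cell monochromatic, so the escape you suggest---that this region ``typically reduces to a single-vertex wedge''---fails already for three lines in general position. Locally adjusting the canonical quadrant does not rescue the scheme either: every vertex bounds four cells, so flipping its color to repair one unbounded cell can spoil an adjacent bounded one. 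The face-parity idea would need a genuinely new ingredient, not a patch.

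The paper's argument is entirely different and avoids orientations altogether. It sweeps the arrangement from left to right; in that order every intersection has exactly two already-swept neighbours in the arrangement graph, so one can greedily color the vertices so that each disagrees with at least one of its predecessors (the paper phrases this via the $3$-chromaticity of the arrangement graph, due to Felsner--Hurtado--Noy--Streinu, followed by identifying two of the three color classes). For a cell of size at least three, its rightmost vertex together with that vertex's two predecessors then witness two distinct colors on the cell boundary. No face $2$-coloring, line orientation, or quadrant bookkeeping enters the picture.
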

\noindent Remark that cells of size two only have one vertex, hence cannot be polychromatic. Therefore,  we only consider cells of size at least 3.
\begin{proof}
It is known that the graph obtained from an Euclidean arrangement of lines by taking only the
bounded edges of the arrangement has chromatic number 3 \cite{FHNS}: Sweep the arrangement with a line from left to right. In this ordering, every vertex in the arrangement is adjacent to exactly two predecessors and hence we can assign the colors greedily, such that each vertex has a color different from at least one of its predecessors.
Finally, we can identify two of the three colors, and no cell which is at least a triangle can be monochromatic.
\end{proof}

The following well-known result considers the coloring of cells so that no line is only adjacent to cells of a single color class:

\begin{theorem}[Folk.]\label{tight:chromcells}
The chromatic number of $\Hcellzone$ is 2.
\end{theorem}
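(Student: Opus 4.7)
The plan is to exhibit the standard parity 2-coloring of the cells of a line arrangement and verify that it makes every zone non-monochromatic. Orient each line $\ell \in L$ arbitrarily so that it has a well-defined ``positive side''. For each cell $c \in C$, define $f(c)$ to be the number of lines $\ell \in L$ such that $c$ lies on the positive side of $\ell$, and color $c$ with the parity of $f(c)$, i.e., $\chi(c) = f(c) \bmod 2$.

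Next I would observe the key property: if two cells $c_1, c_2 \in C$ share an edge, then that edge lies on exactly one line $\ell \in L$, and $c_1, c_2$ agree on which side of every other line they lie; hence $f(c_1)$ and $f(c_2)$ differ by exactly one, so $\chi(c_1) \neq \chi(c_2)$. This is the folklore fact that the dual graph of a line arrangement is bipartite.

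Finally I would use this to finish: for any line $\ell \in L$, the arrangement is simple, so $\ell$ bounds at least one edge of the arrangement (in fact, many), and the two cells incident to such an edge both lie in the zone of $\ell$. By the property above they receive different colors, so the zone of $\ell$ is not monochromatic. This shows that $\chi(\Hcellzone) \le 2$, and since each zone contains more than one cell, the chromatic number is at least $2$, giving equality. There is no real obstacle here; the only thing to be mildly careful about is the orientation convention to make $f(c)$ well-defined, which is handled by fixing an arbitrary orientation of each line at the outset.
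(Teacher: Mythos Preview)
Your proof is correct and follows essentially the same approach as the paper: both reduce the claim to the folklore fact that the cell-adjacency (dual) graph of a line arrangement is bipartite. The paper simply cites this fact, while you spell out the standard parity-coloring proof of bipartiteness and then observe that any edge on a line $\ell$ yields two oppositely-colored cells in the zone of $\ell$.
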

\begin{proof}
This claim is equivalent to the fact that the dual graph of the arrangement (where vertices are 
faces, and there is an edge between two faces if they are adjacent) is bipartite. This result
has appeared in recreational texts and concited some research as well \cite{L1894,Gru5}.
\end{proof}

\section{Guarding Arrangements}\label{sec:guarding}
\sholong{}{We now consider the vertex cover problem of the above hypergraphs. That is, we would like to select the minimum number of vertices so that any hyperedge is adjacent to the selected subset. Geometrically speaking, we would like to select the minimum number of vertices (or cells or lines), so that each cell (or line or cell, respectively) contains at least one of the selected items. Recall that this problem is the complementary of the independent set problem. Hence, for each case we will study the easiest of the two problems.   

}
\subsection{Guarding cells with vertices}

We first consider the following problem: given an arrangement of lines $\mathcal{A}$, how many vertices do we need to pick in order to guard the whole arrangement when lines act as obstacles blocking visibility? This can be rephrased as finding the smallest subset of vertices $V$ so that each cell contains a vertex in $V$, and thus we are looking for bounds on the size of a vertex cover for $\Hvertexcell$.

\begin{theorem}\label{LB:guards}
For any set $L$ of $n$ lines, a vertex cover of the corresponding $\Hvertexcell$ hypergraph has size at most ${n^2}/6+2n$. Furthermore, $n^2/6$ vertices might be necessary.
\end{theorem}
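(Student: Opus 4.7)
I would split the vertex cover construction into two phases, mirroring the ``$n^2/6 + 2n$'' structure. The $2n$ unbounded cells of the arrangement are handled first by selecting one arbitrary boundary vertex per unbounded cell, contributing at most $2n$. The remaining $\binom{n-1}{2}$ bounded cells must then be covered using at most $n^2/6$ additional vertices. For this I would exploit the 3-colorability of the arrangement graph that appears in the proof of Theorem~\ref{tight:chromvertex}: the sweep-line greedy procedure used there produces a proper 3-coloring of the $\binom{n}{2}$ vertices, so the smallest color class has size at most $\binom{n}{2}/3 \le n^2/6$. The subtle point is that a proper 3-coloring need not induce a ``rainbow'' coloring on every face---cells whose boundary is an even cycle may receive only two of the three colors. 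I would circumvent this by either (i) showing via a Hall-type averaging argument that one of the three classes hits every bounded cell, or (ii) arguing that the cells missed by the smallest class number only $O(n)$ and can be patched within the $2n$ slack.

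\textbf{Lower bound.} For the lower bound I would exhibit an arrangement containing at least $n^2/6$ pairwise vertex-disjoint cells: each such cell forces a distinct covering vertex, giving $|S|\ge n^2/6$. The target $n^2/6 \approx \binom{n}{2}/3$ matches a near-perfect packing of triangular cells using essentially all intersection vertices in triples. A promising candidate is a near-pencil of $n$ lines through a common point, slightly perturbed into simple position: all $\binom{n}{2}$ vertices are clustered near the center and the $(n-1)(n-2)/2$ bounded cells are small polygons, many of which are triangles. I would analyze this cellular complex, identify a large family of small triangular cells, and extract a pairwise vertex-disjoint subfamily of size at least $n^2/6 - O(n)$.

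\textbf{Main obstacle.} The lower bound is the more delicate step. Symmetric configurations such as a triangular grid (three families of $n/3$ parallel lines) have only $n^2/3$ distinct intersection vertices and admit at most $n^2/9$ pairwise vertex-disjoint triangles, falling short of the target by a factor of $3/2$. Attaining $n^2/6$ therefore requires an arrangement in truly general position with all $\binom{n}{2}$ vertices distinct, and the central technical challenge is verifying that the chosen family of triangular cells uses disjoint vertex sets. I would approach this by encoding the triangular cells combinatorially (for instance via index patterns of their three bounding lines in the near-pencil order) and exhibiting an explicit matching to certify disjointness.
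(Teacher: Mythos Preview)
Your upper-bound sketch is close in spirit to the paper's, but the place where you wave your hands is exactly the place where the argument lives. You correctly observe that a \emph{proper} 3-coloring of the arrangement graph need not make every bounded cell rainbow, and your proposed fixes (i) and (ii) are not actually arguments. The paper does not use the plain greedy 3-coloring from Theorem~\ref{tight:chromvertex}; it modifies the sweep rule so that rainbow is forced. When the swept vertex $v$ has two already-colored predecessors $u,w$ of the \emph{same} color, the algorithm looks at the unique cell whose boundary contains both segments $uv$ and $wv$ (this cell is being ``closed'' at $v$) and assigns to $v$ whichever of the three colors is still missing on that cell, if one is missing. This single tweak guarantees that every cell of size at least three sees all three colors, so any one color class covers all such cells. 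Note also that the paper spends the $2n$ budget on the cells of size \emph{two}, not on all unbounded cells; unbounded cells of size $\ge 3$ are handled by the rainbow coloring like any other.

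Your lower-bound plan has a structural problem. Asking for $n^2/6$ pairwise vertex-disjoint cells (necessarily triangles) would consume $3\cdot n^2/6=n^2/2$ distinct vertices, which already exceeds the $\binom{n}{2}$ available; so at best you can hope for $n^2/6-\Theta(n)$ this way, and even that is not supplied by a perturbed near-pencil (such arrangements have very few triangular cells, not many). The paper avoids disjointness entirely: it cites the F\"uredi--Pal\'asti arrangement, which has $n^2/3$ triangular cells with the property that every vertex lies on at most \emph{two} of them. A one-line double count then gives that any set of vertices hitting all these triangles has size at least $(n^2/3)/2=n^2/6$. This is both sharper and far easier than manufacturing a vertex-disjoint family.
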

\begin{proof}
First notice that any arrangement can have at most $2n$ cells of size exactly $2$. Hence, these cells can be easily guarded with $2n$ guards. Thus, we focus our attention to cells of size $3$ or more. We will guard these cells via a 3-coloring of the vertices of the arrangement. We sweep the arrangement in a fixed direction, and color the vertices in order. 

Observe that the graph of the arrangement is 4-regular, and when a vertex $v$ is encountered, it has exactly two neighbors (say $u,w$) that have already been colored. If $u$ and $w$ have distinct colors, then we assign the third color to $v$. If they have the same color, then we consider the colors assigned to the vertices of the cell having the segments $uv$ and $wv$ on its boundary. If only two colors are present in the cell, we assign the third one to $v$. Otherwise, we assign arbitrarily one of the two possible colors to $v$. With this construction, it is easy to check that all cells of size at least 3 have vertices of three distinct colors on their boundary. In particular, the vertices of any color can guard all cells of size $3$ or more. Since we used three colors and the total number of vertices is $n^2/3$, there will be a color class with at most $n^2/6$ vertices.

For the lower bound we will use the construction of Furedi~\emph{et al.}~\cite{furedi}. This construction creates a family of arrangements that has $n^{2}/3$ triangles in which any vertex of the arrangement is incident to at most two of these triangles. In particular, any vertex cover of the triangles will need at least $n^2/6$ vertices.
\end{proof}

Recall that the hypergraph $\Hvertexcell$ has ${n\choose 2} ={n^2 \over 2} -{n \over 2}$ vertices. Combining this fact with the preceding bounds on the size of a vertex cover allow us to get similar bounds for the independent set problem:  
\begin{corollary}\label{cor:VCIS}
For any set $L$ of $n$ lines, a maximum independent set of the corresponding $\Hvertexcell$ hypergraph has size at least ${n^2}/3-O(n)$. Furthermore, there exists sets of lines whose largest independent set has size at most ${n^2\over 3} - {n\over 2}$.
\end{corollary}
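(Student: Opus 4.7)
The plan is to derive this corollary as an immediate consequence of Theorem~\ref{LB:guards} by exploiting the complementarity between maximum independent sets and minimum vertex covers. Recall that in any hypergraph $\mathcal{H}=(V,E)$, the complement of a vertex cover is an independent set and vice versa, so the sizes of a maximum independent set and a minimum vertex cover sum exactly to $|V|$. Here the vertex set of $\Hvertexcell$ consists of the intersection points of the lines, so $|V|=\binom{n}{2}=\frac{n^2}{2}-\frac{n}{2}$.

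For the lower bound on the maximum independent set, I would take the upper bound on the vertex cover given by Theorem~\ref{LB:guards}, namely $\frac{n^2}{6}+2n$, and subtract it from $|V|$. This gives a maximum independent set of size at least
\[
\left(\frac{n^2}{2}-\frac{n}{2}\right)-\left(\frac{n^2}{6}+2n\right)=\frac{n^2}{3}-\frac{5n}{2},
\]
which is the stated lower bound.

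For the upper bound, I would invoke the Furedi \emph{et al.} construction cited in the proof of Theorem~\ref{LB:guards}, which yields an arrangement whose minimum vertex cover has size at least $\frac{n^2}{6}$. For that particular arrangement, the maximum independent set therefore has size at most
\[
\left(\frac{n^2}{2}-\frac{n}{2}\right)-\frac{n^2}{6}=\frac{n^2}{3}-\frac{n}{2},
\]
matching the claimed upper bound.

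There is no real obstacle here; the corollary is purely a restatement of Theorem~\ref{LB:guards} under the independent set/vertex cover duality. The only mild subtlety is ensuring that the two complementary bounds combine to match the $\frac{n^2}{3}-O(n)$ phrasing: the additive $2n$ slack in Theorem~\ref{LB:guards} gets absorbed into the $O(n)$ term on the independent set side, while the matching lower bound on the vertex cover translates exactly into the upper bound $\frac{n^2}{3}-\frac{n}{2}$ on the independent set side.
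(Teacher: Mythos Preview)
Your proposal is correct and essentially identical to the paper's own argument: the paper simply notes that $\Hvertexcell$ has $\binom{n}{2}=\frac{n^2}{2}-\frac{n}{2}$ vertices and that combining this with the vertex-cover bounds of Theorem~\ref{LB:guards} (via the complementarity of independent sets and vertex covers) yields the stated bounds. Your explicit computations $\frac{n^2}{3}-\frac{5n}{2}$ and $\frac{n^2}{3}-\frac{n}{2}$ match the entries in the paper's summary table exactly.
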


\subsection{Guarding lines with cells}

Here we consider the problem of touching all lines of $L$ with a smallest subset of cells, i.e., we look for bounds on the size of a vertex cover for $\Hcellzone$.


We begin with a simple proof that a minimal vertex cover of $\Hcellzone$ hypergraph has size at most $\lceil{n \over 2}\rceil$, that we will improve below.
\begin{theorem}\label{warmup:guardwithcells}
Given a set $L$ of $n$ lines, a minimal vertex cover of the corresponding $\Hcellzone$ hypergraph has size at most $\lceil{n \over 2}\rceil$.
\end{theorem}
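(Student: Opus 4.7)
The plan is a simple pairing argument: if we can cover two lines per chosen cell, we immediately get the bound $\lceil n/2 \rceil$, so I would exploit the fact that any two lines share a vertex whose incident cells are bounded by both lines.

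Concretely, I would first partition $L$ into $\lfloor n/2\rfloor$ pairs, leaving one line unpaired if $n$ is odd. For each pair $\{\ell_i,\ell_j\}$, simplicity of $\mathcal{A}$ guarantees that $\ell_i$ and $\ell_j$ meet at a single vertex $v_{ij}=\ell_i\cap\ell_j$ at which exactly four cells of $\mathcal{A}$ are incident (the four ``quadrants'' around $v_{ij}$, each possibly subdivided elsewhere but still a single cell in a neighborhood of $v_{ij}$). Every such incident cell has a segment of $\ell_i$ and a segment of $\ell_j$ on its boundary, hence it lies in the zone of both lines. Pick one such cell $c_{ij}$ arbitrarily.

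I would then take $S=\{c_{ij}\}$ over all pairs, so that $|S|=\lfloor n/2\rfloor$ and every paired line is already touched by some $c_{ij}$. If $n$ is odd, let $\ell$ be the leftover line and append any single cell from its zone (e.g., any cell adjacent to some intersection $\ell\cap\ell'$); this raises the count to $\lceil n/2\rceil$ while covering $\ell$. By construction every line of $L$ has a bounding segment on some cell of the resulting set, which is exactly the condition that $S$ be a vertex cover of $\Hcellzone$.

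There is essentially no obstacle here: the only point requiring care is invoking simplicity so that $\ell_i\cap\ell_j$ is a proper degree-$4$ vertex with four distinct incident cells, each touching both lines. This is why the result appears as a warm-up before the sharper $\frac{19n}{48}+o(n)$ bound of Theorem~\ref{guardwithcells}, whose proof (sketched in \texttt{proofguardwithcells}) refines this idea by first choosing cells that cover four new lines each, then three, and finally falling back on this pairwise construction for the leftover lines.
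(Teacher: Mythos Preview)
Your proof is correct and follows essentially the same pairing argument as the paper: the paper phrases it as a greedy procedure that repeatedly picks two uncovered lines and a cell adjacent to both, while you partition into pairs up front, but the key observation---that any two lines share a cell at their crossing---is identical. Your version is in fact slightly more explicit in justifying \emph{why} such a cell exists (via the four quadrants at a simple intersection), whereas the paper simply asserts it.
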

\begin{proof}
We describe a greedy algorithm to find a vertex cover of size $\lceil{n \over 2}\rceil$; we start with an empty set $L$. We find a pair $p,q$ of lines that we still have to cover. Since every two lines cross, there must exist a cell $c$ adjacent to both $p$ and $q$. We add that cell $c$ to the set $L$, and proceed with the unguarded cells. In the last step, if a single line $\ell$ remains to be covered we add  to $L$ any cell touching $\ell$. Since each cell (except the last one) of $L$ guards at least two lines, at most $\lceil{n \over 2}\rceil$ cells will be added into $L$.
\end{proof}

We next provide a lower bound, and improve as well on the upper bound, for large values of $n$. 
\begin{theorem}\label{guardwithcells}
Given any set $L$ of $n$ lines, a minimal vertex cover of the corresponding $\Hcellzone$ hypergraph has size at most ${19n \over 48} +o(n)$. Moreover, there exists a set $L$ of $n$ lines, such that every vertex cover of the corresponding $\Hcellzone$ hypergraph has size at least ${n \over 4}$.
\end{theorem}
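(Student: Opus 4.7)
The plan is to handle the upper and lower bounds separately.

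For the upper bound, we extend the pairing strategy of Theorem~\ref{warmup:guardwithcells} by preceding it with two greedy phases. The driving observation is that the sum of cell sizes in a simple arrangement equals $2n^2$ while the number of cells is $|C|=n(n+1)/2+1$, so cells average close to four sides and we can afford to be picky at the start.

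In phase 1, while the average number of segments of uncovered lines bounding a remaining cell strictly exceeds 3, pigeonhole produces a cell with at least 4 such segments; we add it to the cover and mark 4 of its bounding lines covered. After $i$ selections the number of uncovered segments is $n(n-4i)$, each contributing $2$ to the total incidence count with cells, so the relevant threshold becomes
\[
\frac{2n(n-4i)}{|C|-i} > 3,
\]
which yields $i \sim n/16$ and hence $n/4$ lines covered. Phase 2 is analogous with threshold $2$, selecting cells bounded by 3 uncovered lines; the corresponding inequality gives $j \sim n/12$ further iterations, covering another $n/4$ lines. Phase 3 then applies the pairing of Theorem~\ref{warmup:guardwithcells} to the remaining $n/2$ lines, using $n/4$ cells. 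Summing produces $n/16 + n/12 + n/4 + o(n) = 19n/48 + o(n)$.

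For the lower bound, the plan is to exhibit an arrangement in which every cell is bounded by at most $4$ lines, so that each cell chosen in a vertex cover guards at most four lines and the cover must have size at least $n/4$. A candidate is a suitably perturbed grid of $n/2$ near-horizontal and $n/2$ near-vertical lines, which produces quadrilateral cells in the central region. The main technical obstacle is controlling the remote regions where two lines of the same near-parallel family meet: one must pick slopes and intercepts so that within each sub-family the cells also have size at most $4$, for instance by arranging the dual points of each sub-family to avoid convex position so that no large inner polygon appears.
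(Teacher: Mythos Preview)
Your upper-bound argument is essentially identical to the paper's: the same two greedy phases driven by the segment-incidence averages $\frac{2n^2-8in}{|C|-i}>3$ and $\frac{2n^2-8in-6jn}{|C|-i-j}>2$, yielding $i\sim n/16$ and $j\sim n/12$, followed by the pairing of Theorem~\ref{warmup:guardwithcells} on the remaining $n/2$ lines.

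The lower bound, however, has a real gap. The paper does not construct the arrangement; it simply invokes the result of Lea\~nos, Lomel\'i, Merino, Salazar and Urrutia~\cite{LLMSU07} that simple Euclidean arrangements with no $(\ge 5)$-gon exist, from which the $n/4$ bound is immediate. Your perturbed-grid candidate does not do the job as described. The central region is indeed made of quadrilaterals, but far from that region each of the two near-parallel sub-families of $n/2$ lines forms its own simple arrangement, and \emph{that} sub-arrangement must also have every cell of size at most~$4$. This is the original problem at half scale, so you are implicitly proposing a recursive construction without setting one up or arguing that the recursion can be realised by straight lines. The fix you suggest---placing the dual points out of convex position---only eliminates the single large inner face that convex position would create; it says nothing about the sizes of the remaining $\Theta(n^2)$ cells of the sub-arrangement. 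Producing such arrangements is precisely the content of~\cite{LLMSU07} and is not routine; you should cite that construction rather than attempt an ad hoc one.
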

\begin{proof}
The lower bound is proved by the fact that there exist arrangements where the largest cell has size $4$ (see \cite{LLMSU07}). This implies that each cell touches at most $4$ lines, and therefore $n/4$ cells are required to touch them all.

The proof of the upper bound claim is a refined version of the method in the preceding theorem: we
first select cells of size four or more and add them to $L$, until any remaining cell that we add to our set is not guaranteed to cover more than three new lines. We then continue adding cells that cover at least three lines in the same fashion. Finally, we complete our construction with cells that cover two lines as in Theorem~\ref{warmup:guardwithcells}. \sholong{Details can be found in the Appendix}{

\proofguardwithcells}.
\end{proof}

\begin{corollary}\label{cor:CZIS}
For any set $L$ of $n$ lines, a maximum independent set of the corresponding $\Hcellzone$ hypergraph has size at least ${n^2 \over 2} + {5n \over 48} -o(1)$ and at most ${n^2 \over 2} + {n \over 4} +1$.
\end{corollary}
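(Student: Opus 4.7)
The plan is to deduce both bounds directly from Theorem \ref{guardwithcells} via the standard complementarity between vertex covers and independent sets in a hypergraph. First I would observe that, with the definitions given in Section~\ref{sec:definitions}, a set $S \subseteq C$ is a vertex cover of $\Hcellzone$ if and only if its complement $C \setminus S$ is an independent set: ``$S$ meets every zone'' means ``no zone is contained in $C\setminus S$.'' Consequently,
\[
\alpha(\Hcellzone) + \tau(\Hcellzone) = |C| = \frac{n(n+1)}{2} + 1,
\]
where $\alpha$ is the maximum independent set size and $\tau$ the minimum vertex cover size.

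Having set this up, the two bounds fall out by substituting the values from Theorem~\ref{guardwithcells}. For the lower bound on $\alpha$, the upper bound $\tau \le \tfrac{19n}{48}+o(n)$ yields
\[
\alpha \;\ge\; \frac{n^2}{2} + \frac{n}{2} + 1 - \frac{19n}{48} - o(n) \;=\; \frac{n^2}{2} + \frac{5n}{48} - o(n),
\]
where the constant and the $n/2$ term combine with $-19n/48$ to give $5n/48$ (up to the $o(n)$ error). For the upper bound on $\alpha$, the lower bound $\tau \ge n/4$ on the worst-case arrangement with all cells of size $\le 4$ immediately gives
\[
\alpha \;\le\; \frac{n^2}{2} + \frac{n}{2} + 1 - \frac{n}{4} \;=\; \frac{n^2}{2} + \frac{n}{4} + 1,
\]
which is the claimed bound.

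There is no real obstacle; the only thing to be careful about is bookkeeping of the additive constants and of the $o(n)$ term, and to remember that the upper bound on $\alpha$ is existential (realized on the same extremal family as the lower bound on $\tau$), while the lower bound on $\alpha$ holds universally since the $19n/48 + o(n)$ bound on $\tau$ from Theorem~\ref{guardwithcells} is universal.
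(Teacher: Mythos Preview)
Your proposal is correct and follows exactly the same approach as the paper, which simply invokes the complementarity of vertex covers and independent sets together with the cell count $|C|=\tfrac{n(n+1)}{2}+1$ and Theorem~\ref{guardwithcells}. Your explicit remark that the upper bound on $\alpha$ is existential while the lower bound is universal is a useful clarification that the paper leaves implicit.
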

This Corollary follows directly from the preceding theorem, the fact that the complement of a vertex cover is an independent set, and that  
any arrangement of $n$ lines in general position has ${n(n+1)\over 2}+1$ cells (hence, the $\Hcellzone$ hypergraph will have that many vertices).

\subsection{Guarding cells with lines}
For the sake of completeness, we also give bounds on the number of lines needed to guard (touch) all cells.

\begin{corollary}\label{cor:CL}
For any set $L$ of $n$ lines, its minimal vertex cover of the corresponding $\Hlinecell$ hypergraph has size at least $n/3$ and at most $n-{\sqrt{n} \over 2}$.
\end{corollary}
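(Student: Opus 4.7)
The plan is to derive both bounds directly from the independent set results in Section~\ref{sec_indep} by exploiting the standard complementarity between maximum independent sets and minimum vertex covers in any hypergraph.

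First I would recall the elementary fact that if $S \subseteq V$ is an independent set of a hypergraph $\Hyper = (V,E)$, then $V \setminus S$ is a vertex cover, and vice versa. This is immediate from the definitions used in Section~\ref{sec:definitions}: $S$ is independent means no hyperedge lies entirely inside $S$, equivalently every hyperedge meets $V \setminus S$, which is exactly the vertex cover condition. Consequently, if $\alpha(\Hyper)$ is the maximum independent set size and $\tau(\Hyper)$ is the minimum vertex cover size, then $\alpha(\Hyper) + \tau(\Hyper) = |V|$.

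For the upper bound, I would apply this identity with $|V| = n$ together with Theorem~\ref{th:is}, which guarantees an independent set of size at least $\sqrt{n}/2$ in $\Hlinecell$. The complement of such an independent set is a vertex cover of size at most $n - \sqrt{n}/2$, giving the claimed upper bound. For the lower bound, I would use Theorem~\ref{th:ubis}, which states that any independent set has size at most $2n/3$; taking complements, every vertex cover has size at least $n - 2n/3 = n/3$.

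There is no real obstacle here: the whole argument is a two-line application of the complementarity identity to the two bounds already established. The only thing to be careful about is to state explicitly that the bounds transfer in opposite directions (a \emph{lower} bound on independent sets yields an \emph{upper} bound on vertex covers, and conversely), which is already the pattern used throughout Table~\ref{tabres}.
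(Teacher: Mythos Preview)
Your proposal is correct and matches the paper's own argument essentially verbatim: the paper also derives the lower bound from Theorem~\ref{th:ubis} and the upper bound from Theorem~\ref{th:is} via the complementarity of vertex covers and independent sets. There is nothing to add.
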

Proof of the lower bound is a direct consequence of the complementariness of the vertex cover/independent set and Theorem \ref{th:ubis} (upper bound on the maximum independent set of $\Hlinecell$). Analogously, the upper bound is a consequence of Theorem \ref{th:is}.

\section{Concluding Remarks}\label{sec:conclusion}
Clearly, the main open problems arising from our work consist of closing gaps (when they exist) between lower and upper bounds; this is especially interesting in our opinion for the
problem of coloring lines without producing any monochromatic cell. \sholong{}{We observe that most of our observations hold for pseudo lines as well. Hence, another natural extension would be studying how do the bounds change when we consider families of curves that any two intersect at most $t$ times (for some constant $t>0$).}

However, it is worth noticing that there are several computational issues that are interesting
as well. For example, it is unclear to us which is the complexity of deciding whether a given
arrangement of lines admits a two-coloring in which no cell is monochromatic.

\sholong{\newpage}{}
{

}
\sholong{
\appendix
\section{Proofs omitted from the Document}

\newtheorem*{guardwithcells}{Theorem \ref{guardwithcells}}
\begin{guardwithcells}
Given any set $L$ of $n$ lines, a minimal vertex cover of the corresponding $\Hcellzone$ hypergraph has size at most ${19n \over 48} +o(n)$. Moreover, there exists a set $L$ of $n$ lines, such that every vertex cover of the corresponding $\Hcellzone$ hypergraph has size at least ${n \over 4}$.
\end{guardwithcells}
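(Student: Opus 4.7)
The plan is to handle the two bounds separately. For the lower bound, I would invoke the construction of Leaños et al.~\cite{LLMSU07}, which produces arrangements in which every cell has at most four sides; any cell chosen in a vertex cover of $\Hcellzone$ then touches at most four of the $n$ lines, so the cover must have size at least $\lceil n/4 \rceil$.

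For the upper bound, I would refine the greedy scheme of Theorem~\ref{warmup:guardwithcells} into three phases driven by a single double count. Each of the $n$ lines is split by the other $n-1$ into $n$ pieces, so $\mathcal{A}$ has exactly $n^{2}$ segments, and since each segment bounds two cells, the total number of (segment-of-uncovered-line, cell) incidences starts at $2n^{2}$. Once a chosen cell claims four of its bounding lines as covered, all $2n$ segments of each such line disappear from the count, removing $8n$ incidences; against $|C|=n(n+1)/2+1$ cells total, this gives me a clean inequality governing how long each greedy phase can last. Phase~1 repeatedly picks a cell bounded by at least four still-uncovered lines while the average number of uncovered-line sides per remaining cell is strictly greater than three. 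Solving $(2n^{2}-8in)/(|C|-i)>3$ yields $i \sim n/16$ cells, together covering $n/4$ lines. Phase~2 drops the threshold by one, picking cells bounded by at least three uncovered lines while the corresponding average exceeds two; plugging in the Phase~1 residue and solving $(2n^{2}-8in-6jn)/(|C|-i-j)>2$ yields $j\sim n/12$ further cells, covering another $n/4$ lines. Phase~3 invokes the pairing argument of Theorem~\ref{warmup:guardwithcells} on the remaining $\sim n/2$ lines, using at most $n/4$ more cells. Summing, the total is $n/16+n/12+n/4+o(n)=19n/48+o(n)$.

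The main obstacle I expect is the pigeonhole justification at each phase: I must confirm that whenever the average number of uncovered-line sides over the remaining cells strictly exceeds the threshold $t$, some individual remaining cell has at least $t+1$ such sides, so that the greedy step is actually feasible. This is where care is needed in bookkeeping, since the numerator counts only segments of still-uncovered lines and the denominator counts only cells whose bounding lines are not yet all covered. Once that feasibility is in hand, the rest of the argument is routine asymptotic arithmetic on the three recurrences above.
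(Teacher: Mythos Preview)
Your proposal is correct and follows essentially the same approach as the paper: the lower bound via the Lea\~nos et al.\ arrangements with no cell of size $\geq 5$, and the upper bound via the same three-phase greedy selection governed by the identical averaging inequalities $(2n^{2}-8in)/(|C|-i)>3$ and $(2n^{2}-8in-6jn)/(|C|-i-j)>2$, finishing with the pairing from Theorem~\ref{warmup:guardwithcells}. You have also correctly flagged the one point the paper treats lightly, namely the feasibility of each greedy step from the averaging bound.
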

\begin{proof}
Proof of the lower bound was given in the main document, hence we focus in the upper bound. \proofguardwithcells
\end{proof}
}{}
\end{document}